\documentclass[11pt,letterpaper,reqno]{amsart}

\pdfoutput=1

\usepackage{fullpage}
\usepackage[foot]{amsaddr}
\usepackage{amsmath}
\usepackage{amsthm}
\usepackage{amssymb}
\usepackage{mathrsfs}
\usepackage{url}

\usepackage{todonotes}
\usepackage{mathtools}
\usepackage{textpos}
\usepackage[T1]{fontenc}

\usepackage[breaklinks,bookmarks=true,hypertexnames=false,pagebackref]{hyperref}
\hypersetup{colorlinks=true, citecolor=blue, linkcolor=red, urlcolor=blue}
\usepackage{cleveref}

\usepackage{thm-restate}

\makeatletter
\def\paragraph{\@startsection{paragraph}{4}%
  \z@\z@{-\fontdimen2\font}%
  {\normalfont\bfseries}}
\makeatother

\newtheorem{theorem}{Theorem}

\newtheorem{lemma}[theorem]{Lemma}

\theoremstyle{definition}

\theoremstyle{remark}
\newtheorem{remark}{Remark}

\AddToHook{env/lemma/begin}{\crefalias{theorem}{lemma}}
\AddToHook{env/claim/begin}{\crefalias{theorem}{claim}}
\AddToHook{env/observation/begin}{\crefalias{theorem}{observation}}
\AddToHook{env/proposition/begin}{\crefalias{theorem}{proposition}}
\AddToHook{env/corollary/begin}{\crefalias{theorem}{corollary}}
\AddToHook{env/condition/begin}{\crefalias{theorem}{condition}}
\AddToHook{env/definition/begin}{\crefalias{theorem}{definition}}
\AddToHook{env/question/begin}{\crefalias{theorem}{question}}
\AddToHook{env/example/begin}{\crefalias{theorem}{example}}
\AddToHook{env/conjecture/begin}{\crefalias{theorem}{conjecture}}
\AddToHook{env/remark/begin}{\crefalias{theorem}{remark}}

\crefname{theorem}{Theorem}{Theorems}
\crefname{observation}{Observation}{Observations}
\crefname{claim}{Claim}{Claims}
\crefname{condition}{Condition}{Conditions}
\crefname{algorithm}{Algorithm}{Algorithms}
\crefname{property}{Property}{Properties}
\crefname{example}{Example}{Examples}
\crefname{fact}{Fact}{Facts}
\crefname{lemma}{Lemma}{Lemmas}
\crefname{corollary}{Corollary}{Corollaries}
\crefname{definition}{Definition}{Definitions}
\crefname{remark}{Remark}{Remarks}
\crefname{proposition}{Proposition}{Propositions}
\crefname{section}{Section}{Sections}
\crefname{equation}{equation}{equations}

\newcommand{\vsupp}{\operatorname{supp}}

\title{Partition Rank and Algebraic Circuit Lower Bounds}

\author{Cornelius Brand, Petteri Kaski, Jiaheng Wang}
\address[Cornelius Brand]{University of Regensburg.}
\address[Petteri Kaski]{Aalto University.}
\address[Jiaheng Wang]{University of Regensburg, \and University of Helsinki.}
\thanks{This project is funded by the European Union (ERC, CountHom, 101077083). Views and opinions expressed are however those of the author(s) only and do not necessarily reflect those of the European Union or the European Research Council Executive Agency. 
JW has received financial support from a postdoctoral fellowship funded by the Helsinki Institute of Information Technology (HIIT)}

\begin{document}

\begin{abstract}
    Strassen's theory of bilinear complexity provides a mathematical characterization of the arithmetic complexity of primitives such as matrix multiplication via the rank of tensors. However, the connection to tensor rank is known to break down in higher degrees of multilinearity.
    
    In this work, we highlight an unexplored connection between a generalized notion of tensor rank, which can be defined in Naslund's framework of partition ranks (JCTA~2020),
    and multiplicative complexity. 
    These partition ranks allow us to control the multiplicative complexity, 
    and thus arithmetic complexity, in any constant degree of multilinearity 
    from below, while recovering Strassen's seminal characterization in the bilinear case. 
    This enables novel potential applications of the rank-based approaches to problems in fine-grained algorithms and complexity, such as the hyperclique conjecture of Lincoln-Williams-Vassilevska Williams (SODA~2018).
    Moreover, we exhibit connections to established notions of rank, such as tensor slice rank (in the sense of Tao and Sawin), as well as its symmetric variant. For computing the latter symmetric variant, we point out a simple NP-hardness proof, contrasting the rather involved NP-hardness proof for ordinary, non-symmetric tensor slice rank by Bläser et al. (SODA~2021).
\end{abstract}

\maketitle

\begin{textblock}{5}(9.6, 5.8) \includegraphics[width=90px]{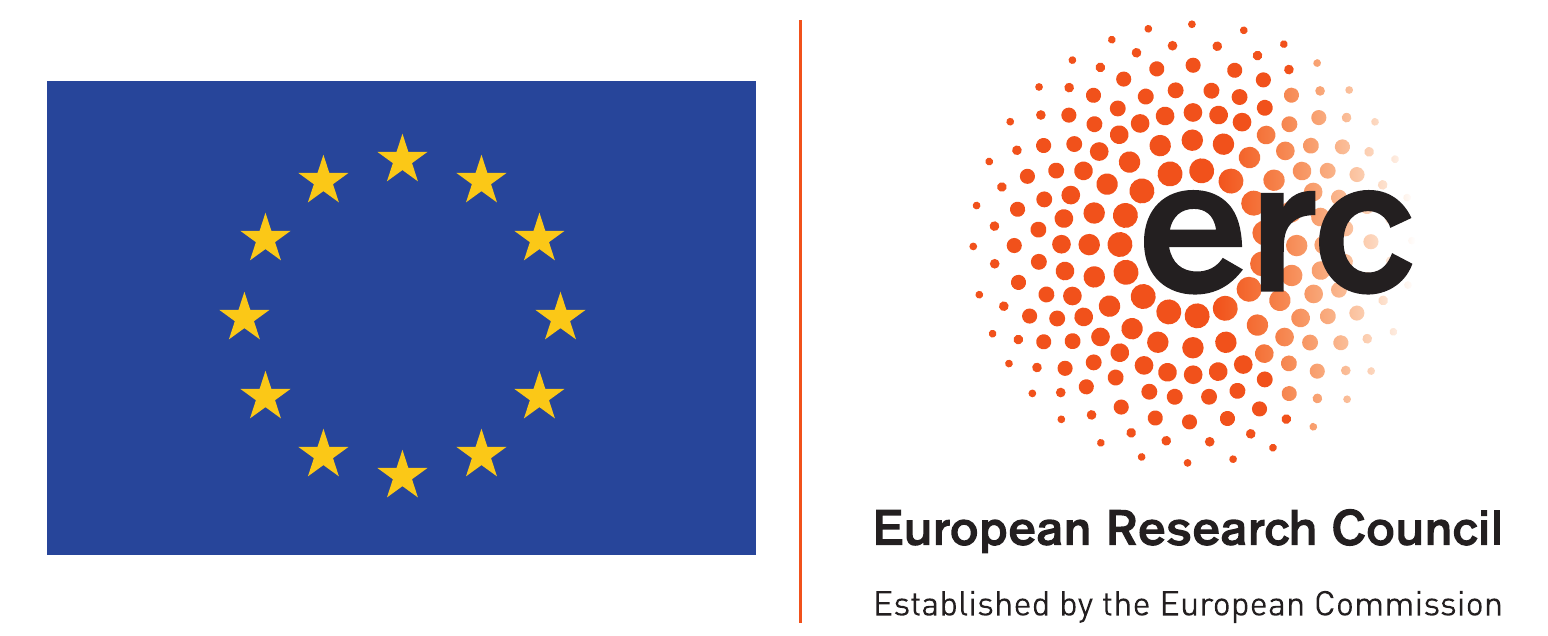} \end{textblock}

\section{Introduction}

Strassen's 1969 breakthrough on algorithms for matrix multiplication~\cite{Strassen1969} has had substantial impact both within and outside of algebraic complexity theory.
Besides this algorithmic result, Strassen also laid out a structural theory around \emph{lower bounds} on the so-called {bilinear complexity} of computation in algebraic structures such as matrices and polynomials. One of the cornerstones of this theory is the result that the number of multiplications between inputs from a field $\mathbb F$ needed to perform such operations using any possible computation scheme (their \emph{multiplicative complexity} $L$) 
is characterized up to a small and explicit factor by the \emph{rank} of an associated tensor~\cite[Korollar 5]{Strassen1973}; see the discussion of~\cite[Eq.~(14.8)]{BurgisserCS1997} for an account in English. For example, Strassen's theory reduces the fast evaluation of a matrix product $W = UV$ for given square matrices $U = (U_{i,j}:i,j\in[n])$ and $V = (V_{i,j}:i,j\in[n])$ into the task of finding low-rank decompositions for the associated $3$-tensor, viewed as a polynomial 
\begin{equation}
\label{eq:mm}
\mathrm{MM}_n = \sum_{i,j,k\in[n]} U_{i,j}V_{j,k}W_{i,k}
\end{equation}
that is linear in each of the three sets of indeterminates, $U_{i,j}$, $V_{i,j}$, and $W_{i,j}$ for $i,j\in[n]:=\{1,\ldots,n\}$.
To state Strassen's main result in the language of polynomials, 
we define the {\em multiplicative complexity} $L(f)$ of $f\in\mathbb F[X_1,\ldots,X_q]$ as the smallest $\ell$ such that $f$ is computed by any computation scheme that performs no more than $\ell$ essential multiplications (formal definitions are given later).

Strassen's main result now connects, up to small constants, the multiplicative complexity of a
bilinear map, viewed as a $3$-tensor or the associated polynomial $f$, with the tensor rank $R(f)$ of the tensor. 
Adapting the constants to the present formalism, this reads:
\begin{theorem}[Strassen~\cite{Strassen1973}] \label{thm:strassen}
    For every\/ $3$-tensor $f \in (\mathbb F^m)^{\otimes 3}$, we have 
    \begin{align}
    \frac{1}{96}\cdot R(f) \leq L(f) \leq 2 \cdot R(f)\,.
    \end{align}
\end{theorem}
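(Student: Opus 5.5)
The plan is to prove the two inequalities separately, viewing $f$ as the trilinear form $f=\sum_{i,j,k} t_{ijk}X_iY_jZ_k$ in three groups of $m$ indeterminates. Throughout, $L$ counts only multiplications between non-constant polynomials; additions and scalar multiplications are free.

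\emph{Upper bound $L(f)\le 2R(f)$.} Take a rank decomposition $f=\sum_{r=1}^{R} a_r(X)\,b_r(Y)\,c_r(Z)$ with linear forms $a_r,b_r,c_r$. Each linear form costs nothing, and each summand needs two essential multiplications, $(a_r\cdot b_r)\cdot c_r$. The final summation is free, so $L(f)\le 2R(f)$.

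\emph{Lower bound $R(f)\le 96\,L(f)$.} This is the substantive direction. I would reduce it to Strassen's bilinear theory in three steps.

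\textbf{Step 1: homogenisation.} Given a straight-line program for $f$ with $\ell=L(f)$ essential multiplications, I first normalise it to a circuit computing only trilinear data. Each intermediate result is split into its multihomogeneous components of multidegree at most $(1,1,1)$, i.e. the components of degree $\le1$ in each of $X,Y,Z$; only these can contribute to $f$. A product $g\cdot h$ then has each needed component written as a sum of products of components of $g$ and $h$. Types that are linear in a single group, such as pure $X$-linear forms, come for free as linear combinations of inputs. The remaining nontrivial products are bilinear pieces $XY$, $XZ$, $YZ$ and trilinear pieces arising as a bilinear piece times a linear form. Counting the component pairs that arise shows each original multiplication is replaced by a constant number $c$ of products of the following two kinds:
\begin{itemize}
\item[(a)] linear times linear forms, giving bilinear quantities;
\item[(b)] bilinear quantity times linear form, giving trilinear quantities.
\end{itemize}
This yields an expression
\[
f=\sum_{s=1}^{c\ell} q_s\cdot \lambda_s ,
\]
where each $\lambda_s$ is a linear form and each $q_s$ is a bilinear form in two of the groups. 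Moreover, all the $q_s$ are linear combinations of at most $c'\ell$ products of linear forms.

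\textbf{Step 2: rank accounting.} Each product of linear forms $(\alpha+\beta+\gamma)(\alpha'+\beta'+\gamma')$, with $\alpha,\alpha'$ in $X$, $\beta,\beta'$ in $Y$ and $\gamma,\gamma'$ in $Z$, contributes to the $XY$ part only the terms $\alpha\beta'+\beta\alpha'$, which has rank at most $2$ as a matrix; the same holds for the $XZ$ and $YZ$ parts. Multiplying a rank-one bilinear piece by a linear form $\lambda$ and extracting the trilinear part uses only the complementary component of $\lambda$, so it yields a rank-one $3$-tensor. Hence
\[
R(f)\le \sum_s \operatorname{rank}(q_s)\le C\ell .
\]
I will track constants to reach $C=96$: a factor $2$ from the symmetrisation $\alpha\beta'+\beta\alpha'$, a factor $3$ from the three bilinear types, and the factor from homogenisation splitting. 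I will match Strassen's accounting in \cite[Ch.~14]{BurgisserCS1997}, adapting its bilinear-to-quadratic-form bound of factor $2$ to the trilinear setting.

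The main obstacle is Step~1. Homogenisation must not blow up by more than a constant factor, and products whose factors have constant terms (and, in general, divisions) must be handled. For divisions I would first use Strassen's division-elimination via power series truncated at degree $3$; at degree $3$ this also costs only a constant factor. I would expect these two steps to be where the specific constant $96$ is determined.
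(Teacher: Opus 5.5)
Your upper bound $L(f)\le 2R(f)$ is fine. The lower bound has a genuine gap in Step~2. The inequality $R(f)\le\sum_s\operatorname{rank}(q_s)$ is true, but the claim $\sum_s\operatorname{rank}(q_s)\le C\ell$ is not justified and is false in general. You have $\Theta(\ell)$ bilinear forms $q_s$, and each one is a linear combination of up to $c'\ell$ products of linear forms. So each $\operatorname{rank}(q_s)$ can be $\Theta(\ell)$, and your accounting only yields $R(f)=O(L(f)^2)$.

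This really happens. Take $f=\sum_{s=1}^m q_sZ_s$ with $q_s=\sum_{i=1}^m c_{s,i}X_iY_i$ and all $c_{s,i}\neq 0$. The obvious program computes the products $X_iY_i$ and then the products $q_s\cdot Z_s$, so $\ell=2m$. Step~1 returns exactly the decomposition $f=\sum_s q_s Z_s$, yet $\sum_s\operatorname{rank}(q_s)=m^2$. Your observation that one product of linear forms contributes rank at most $2$ to each bilinear part does not rescue this. You decompose each $q_s$ separately, so every shared product is paid for once per $s$.

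The missing idea is to exchange the order of summation. You already have the ingredient: all $q_s$ lie in the span of one fixed family $P_1,\ldots,P_N$ of products of linear forms, with $N=O(\ell)$. Write $q_s=\sum_i c_{s,i}P_i$ and regroup:
\[
f=\sum_s q_s\lambda_s=\sum_{i=1}^N P_i\mu_i,\qquad \mu_i=\sum_s c_{s,i}\lambda_s .
\]
Each $\mu_i$ is a single linear form in the complementary group. In your multihomogeneous bookkeeping, each multiplication produces two rank-one products per bilinear type, so this gives $R(f)\le 6L(f)$. That is stronger than the required $96$.

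This regrouping is exactly the content of the paper's \cref{lem:seqlin}. There, the products $\alpha_i\beta_i$ are fixed once for the whole sequence, and the cofactors $g_i^{[h]}$ are collected per product.

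The paper obtains $96=16\cdot 6$ by specialising \cref{thm:main} to $d=3$. Total-degree homogenisation gives $H\le 16L$ by \eqref{eq:homogeneization}. \cref{lem:strength} gives $S_{1,1,1}\le H$. Set-multilinearisation gives $R\le 3!\,S_{1,1,1}$ by \cref{lem:set-multilinearization}. So the constant comes from homogenisation and set-multilinearisation, not from division elimination. Multiplication sequences have no divisions, so that part of your plan is unnecessary.
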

The power of Strassen's approach is exemplified by matrix multiplication algorithms: 
the \emph{exponent of matrix multiplication} $\omega$ encodes the minimal (more precisely, infimal) value such that there is an algorithm that computes the product of two $n \times n$ matrices using $O(n^\omega)$ arithmetic operations. 

Essentially all progress on bounding this number from above since 
Strassen's 1969 result has come through increasingly intricate decompositions of the matrix multiplication tensor \eqref{eq:mm} and \cref{thm:strassen}; for many examples hereof, see Bläser's survey~\cite{BlaeserMaMu}.

Strassen's lower-bound theory, however, is known to be strictly bilinear. 
Namely, the connection between arithmetic circuit complexity and tensor rank in \cref{thm:strassen} fails for 
all $d$-tensors ($(d-1)$-linear maps) with $d\geq 4$, where the so-called {\em product of inner products} tensor 
defined for $d=4$ by the polynomial
\begin{equation}
\label{eq:pip}
\mathrm{PIP}_n = \biggl(\sum_{i\in [n]}X_iY_i\biggr)\biggl(\sum_{j\in [n]}Z_jW_j\biggr)
\end{equation}
provides a well-known demonstration; namely the formula \eqref{eq:pip} shows 
$L(\mathrm{PIP}_n)\leq 2n+1$ but we have $R(\mathrm{PIP}_n)=n^2$ by a matrix flattening of the tensor.
Within fine-grained complexity theory, 
Lincoln, Williams, and Vassilevska Williams~\cite[Theorem~8.1]{LincolnWW18} 
highlight this inadequacy of tensor rank in the context of (what is now known as) the hyperclique conjecture.
One way to approach the hyperclique conjecture algebraically is to seek lower bounds for the multiplicative 
complexity of the corresponding polynomials, such as the $3$-uniform $4$-hyperclique polynomial:
\begin{equation}
\label{eq:hyp}
\mathrm{HYP}_n = \sum_{i,j,k,\ell\in [n]}X_{i,j,k}Y_{i,j,\ell}Z_{i,k,\ell}W_{j,k,\ell}\,.
\end{equation}

\subsection*{Our Results}
We show that an alternative tensor parameter to the tensor rank $R(f)$ gives a lower bound for 
the multiplicative complexity, and thus algebraic circuit complexity, of a $d$-tensor
$f\in(\mathbb F^m)^{\otimes d}$ for any $d\geq 3$. This alternative parameter is a symmetric 
instance of Naslund's partition rank~\cite{Naslund2020}. 
Namely, for $\lambda\vdash d$ an integer partition of $d$, let us write $R_\lambda(f)$ 
for Naslund's $\mathcal{P}$-rank of $f$, where $\mathcal{P}$ consists of all set partitions of $[d]$ with size partition $\lambda$. 
We call $R_\lambda(f)$ the $\lambda$-{\em rank} of $f$. 
Two immediate special cases of $\lambda$-rank are 
the tensor rank $R=R_{1,1,\ldots,1}$ and {\em slice} rank $R_{d-1,1}$. 
(We postpone a detailed discussion of
tensor and polynomial parameters to the section on related work below as well as to
the preliminaries in \cref{sect:mult}.)
Our main result in this paper is that the {\em two-slice} rank $R_{d-2,1,1}$ 
controls the multiplicative complexity from below for all $d\geq 3$.

\begin{restatable}[Main; Lower bound for multiplicative complexity of $d$-tensors, $d\geq 3$]{theorem}{mainthm}\label{thm:main}
    For every $d\geq 3$ and every\/ $d$-tensor $f \in (\mathbb F^m)^{\otimes d}$, we have 
    \begin{align} 
    \frac{1}{(d+1)^2d(d-1)}\cdot R_{d-2,1,1}(f) \leq L(f)\,.
    \end{align}
\end{restatable}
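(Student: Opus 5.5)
Starting from an optimal circuit with $\ell=L(f)$ essential multiplications, we will show that $f$, viewed as a multilinear form in the $d$ variable blocks $X^{(1)},\dots,X^{(d)}$, can be written as a sum of at most $(d+1)^2d(d-1)\,\ell$ terms. Each term will be a product $a\cdot b\cdot g$, where $a$ is linear in one block $X^{(i)}$, $b$ is linear in another block $X^{(j)}$, and $g$ is multilinear in the remaining $d-2$ blocks. This is precisely a $(d-2,1,1)$-decomposition, so it bounds $R_{d-2,1,1}(f)$.

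\textbf{Step 1 (homogenization).} Let $\Gamma$ compute $f$ with $\ell$ essential multiplications $P_t=A_tB_t$, where $A_t$ and $B_t$ are linear combinations of inputs and earlier products. Since $f$ is $d$-homogeneous and multilinear in the blocks, we decompose every intermediate polynomial into its multihomogeneous parts indexed by subsets $S\subseteq[d]$. Only multilinear parts (degree $\le 1$ in each block) can contribute to $f$, and we keep only those. The part of $P_t$ of type $S$ is $\sum_{S=T\sqcup U}A_t^{(T)}B_t^{(U)}$. Tracking all $2^d$ types would cost exponentially, so we use a coarser bookkeeping: the grading by total degree $0,\dots,d$. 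Gate $t$ splits into its homogeneous degree components $P_t^{(k)}=\sum_{i+j=k}A_t^{(i)}B_t^{(j)}$, which involves at most $(d+1)^2$ products of homogeneous pieces. This is the standard Strassen-style homogenization, and it is the source of the factor $(d+1)^2$.

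\textbf{Step 2 (peeling off one product).} The top-degree output is $f=\sum_t \alpha_t P_t^{(d)}$ plus linear terms, which vanish for $d\ge3$. Expanding $P_t^{(d)}=\sum_{i+j=d}A_t^{(i)}B_t^{(j)}$, each summand is a product of lower-degree homogeneous pieces. We now project onto the multilinear, block-respecting part. For a product $A^{(i)}B^{(j)}$ with $i,j\ge1$, the projection is $\sum_{T\sqcup U=[d],|T|=i}A^{(T)}B^{(U)}$. If $i\ge2$ and $j\ge2$, this is not yet of shape $(d-2,1,1)$. Here we argue inductively instead: every homogeneous piece $A^{(T)}$ with $|T|\ge2$ is itself a linear combination of top-level products of earlier gates. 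We therefore rewrite $f$ and recurse, always peeling one linear factor. Concretely, we show by induction on the gates that every multilinear part $Q^{(T)}$ of any gate lies in the span of terms $x_i\cdot(\text{stuff})$, with at most $|T|$ such terms per essential multiplication. Applying this twice yields two linear factors, costing roughly $d(d-1)$ choices of the pair of blocks $\{i,j\}$. The cleaner route is the standard trick: a multilinear $Q^{(T)}$ satisfies $Q^{(T)}=\sum_{i\in T}$ (Euler-like identity)$/|T|$, which in characteristic issues must be replaced by picking a distinguished block. For each essential multiplication and each ordered pair of blocks $(i,j)$, we obtain one term linear in $X^{(i)}$, linear in $X^{(j)}$, and multilinear in the rest.

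\textbf{Step 3 (counting).} Each of the $\ell$ multiplications contributes at most $(d+1)^2$ degree-pair products, times $d(d-1)$ ordered pairs of distinguished singleton blocks. This gives $R_{d-2,1,1}(f)\le (d+1)^2d(d-1)\,\ell$, as claimed.

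The main obstacle is Step 2. We must make rigorous that one essential multiplication contributes only boundedly many $(d-2,1,1)$-terms, even though its factors $A_t,B_t$ are themselves nonlinear in earlier products. The first thing to try is to formulate the statement that is actually proved by induction over the circuit: for every gate and every block-type $T$ with $|T|\ge 2$ and every $i\ne j\in T$, $Q^{(T)}$ equals a sum, over essential multiplications $s\le t$, of terms $a_s\cdot b_s\cdot g_s$ with $a_s$ linear in $X^{(i)}$, $b_s$ linear in $X^{(j)}$, and $g_s$ multilinear on $T\setminus\{i,j\}$. The goal is to choose the $a_s,b_s$ uniformly, depending only on the gate $s$ and not on the target $T$, so that the count stays linear in $\ell$. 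To achieve this, we take $a_s,b_s$ to be the degree-one components $A_s^{(\{i\})},B_s^{(\{j\})}$ (or the swapped pair) and absorb everything else into $g_s$.
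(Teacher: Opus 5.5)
Your outer bookkeeping matches the paper's: homogenization costs $(d+1)^2$, and one term per ordered pair of distinct blocks per product costs $d(d-1)$. But Step 2, which you yourself flag as the obstacle, is never proved, and the invariant you propose for it is false.

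Choosing $a_s=A_s^{(\{i\})}$ and $b_s=B_s^{(\{j\})}$ (or swapped) for \emph{every} gate $s$ overlooks that gates with a factor of degree $\ge 2$ supply no usable linear factors. Take $f=x_1y_1z_1$ computed as $P_1=y_1z_1$, $P_2=x_1\cdot P_1$, and let $(i,j)$ be the $x$- and $y$-blocks. For $P_1$, neither factor has an $x$-component, so both products vanish. For $P_2$, the factor $B_2=P_1$ has no linear part, so both products vanish again. Yet the $[3]$-part of $P_2$ is $f\neq 0$.

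Worse, no invariant with a \emph{fixed} pair $(i,j)$ can hold with $O(\ell)$ terms. $\mathrm{PIP}_n$ is computed homogeneously with $2n+1$ multiplications. However, any decomposition $\sum_s a_s(X)\,b_s(Z)\,g_s(Y,W)$ of it needs at least $n^2$ terms, because its $(X,Z)$-versus-$(Y,W)$ flattening is the $n^2\times n^2$ identity matrix. The block pair must therefore vary from term to term. Your Step 3 ("each essential multiplication and each ordered pair") has the right target shape, but nothing in Step 2 produces it.

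The missing idea is the paper's \cref{lem:seqlin}. After homogenization, ignore the block structure and collect only the gates of degree exactly two, $f_{i_j}=\alpha_j\beta_j$, where $\alpha_j,\beta_j$ are full linear forms over all $md$ variables. By induction, every computed homogeneous $h$ of degree $\ge 2$ can be written as $h=\sum_j \alpha_j\beta_j\, g_j^{[h]}$ using the \emph{same} fixed products $\alpha_j\beta_j$:
\begin{itemize}
\item a gate $st$ with $\deg t\ge 2$ adds nothing new, since you just multiply $t$'s decomposition by $s$ and distribute;
\item a gate with $\deg s=\deg t=1$ adds the single new pair $(s,t)$.
\end{itemize}
This gives $S_{d-2,1,1}(f)\le H(f)$. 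Only at the very end do you project to the set-multilinear part, using
\[
\pi(\alpha\beta g)=\sum_{a\neq b}\alpha^{(a)}\beta^{(b)}\,g|_{[d]\setminus\{a,b\}}.
\]
Each product then contributes $d(d-1)$ terms, and all ordered pairs $(a,b)$ appear simultaneously rather than one fixed pair.
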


We observe that \cref{thm:main} recovers 
Strassen's lower-bound theorem (\cref{thm:strassen}) when $d=3$.
Comparing the slice rank and the two-slice rank, we observe
$R_{d-1,1}\leq R_{d-2,1,1}$ for all $d\geq 3$ (cf.~\cref{rem:refinement}) 
implying that the two-slice rank gives at least as good control for lower bounds 
as the slice rank. 
Beyond these immediate observations, we have two-fold further motivation for \cref{thm:main} and 
the present paper.

Our first motivation for \cref{thm:main} is that the two-slice rank $R_{d-2,1,1}$ gives
{\em better} generic control for lower bounds than the slice rank $R_{d-1,1}$. 
Indeed, for constant $d\geq 3$ we have 
$R_{d-1,1}(f)\leq m$ for all tensors 
$f \in (\mathbb F^m)^{\otimes d}$, whereas heuristically most tensors 
$f \in (\mathbb F^m)^{\otimes d}$ satisfy 
$R_{d-2,1,1}(f)=\Omega(m^2)$ (cf.~\cref{rem:random}).
Thus, in contrast to slice rank, \cref{thm:main} enables superlinear-in-$m$ lower 
bounds on the arithmetic complexity of specific problems encoded by polynomials
$f\in(\mathbb F^m)^{\otimes d}$ via superlinear lower bounds 
on the two-slice rank $R_{d-2,1,1}(f)$. For example, \Cref{thm:main} implies that
one possible strategy to prove the (algebraic) hyperclique conjecture 
$L(\mathrm{HYP}_n)=\Theta(n^4)$ for $d=4$ is to show the lower bound
$R_{2,1,1}(\mathrm{HYP}_n)=\Omega(n^4)$ on the degree-$4$ tensor $\mathrm{HYP}_n \in (\mathbb F^{m})^{\otimes 4}$, where $m=n^3$, in~\eqref{eq:hyp}.

Our second motivation for \cref{thm:main} is an elementary observation
that enables a proof in three short steps, which we now sketch.

\subsection*{\texorpdfstring{Overview of Techniques and a Proof of \cref{thm:main}}{Overview of Techniques and a Proof of Theorem 2}}

Namely, we relate parameters of $d$-tensors to analogous parameters of homogeneous polynomials 
via the basic fact that 
a polynomial $f$ representing a $d$-tensor {\em is} a homogeneous polynomial $f$ of degree $d$,
with the tensors \eqref{eq:mm}, \eqref{eq:pip}, and \eqref{eq:hyp} above giving concrete examples.

The measure of complexity employed for both tensors and polynomials is their multiplicative complexity, which we now formally define.
A sequence $(f_1,\ldots,f_\ell) \in \mathbb F[X_1,\ldots,X_q]^\ell$ of polynomials is a  \emph{multiplication sequence} of \emph{length} $\ell$ if, for all $1 \leq j \leq \ell$, there are $s_j,t_j \in \operatorname{span}_{\mathbb F}\{1,X_1,\ldots,X_q,\allowbreak f_1,\ldots,f_{j-1}\}$ with $f_j = s_jt_j$. We say that the sequence \emph{computes} a polynomial $f$ whenever $f \in \operatorname{span}_{\mathbb F} \{1,X_1,\ldots,X_q,f_1,\ldots,f_\ell\}$. The \emph{multiplicative complexity $L(f)$ of $f$} is the smallest length of a multiplication sequence computing $f$. Note that every ordinary arithmetic circuit (with multiplicative fan-in two) that uses $\ell$ multiplication gates can be converted into a multiplication sequence of length $\ell$, and vice versa. Consequently, a lower bound on $L(f)$ is a lower bound on arithmetic circuit complexity.

Accordingly, to study multiplicative complexity of tensors in their interpretation as homogeneous polynomials, it is convenient 
to work with multiplication sequences $(f_1,\ldots,f_\ell)$ where each $f_i$ is homogeneous of degree $d_i$ for some $d_i\geq 0$ for all $1\leq i\leq \ell$. 
Such multiplication sequences are themselves called \emph{homogeneous}. Whenever $f$ is homogeneous, we write $H(f)$ for the \emph{homogeneous} multiplicative complexity of~$f$, as the smallest $\ell$ such that there is a homogeneous multiplication 
sequence that computes $f$. Every multiplication sequence for a homogeneous polynomial $f$ 
of degree $d$ can be {\em homogenized}\/~\cite[Lemma (21.25)]{BurgisserCS1997}, one multiplication at a time, to yield 
\begin{equation}
\label{eq:homogeneization}
\frac{1}{(d+1)^2}\cdot H(f)\leq L(f) \leq H(f)\,.
\end{equation}
In particular, to control $L(f)$ from below, by~\eqref{eq:homogeneization} 
it suffices to control $H(f)$ from below. 
Such control is afforded by the following lemma, 
which is our main technical contribution.
It connects the homogeneous complexity to the polynomial counterpart of 
the $\lambda$-rank $R_\lambda$, namely the $\lambda$-{\em strength} $S_\lambda$,
whose detailed definition we postpone to \cref{sect:mult}.

\begin{restatable}[Main; Lower bound for homogeneous multiplicative complexity]{lemma}{strlemma}\label{lem:strength}
For every homogeneous polynomial $f$ of degree $d\geq 3$, we have
\begin{equation}
\label{eq:strength}
S_{d-2,1,1}(f) \leq H(f)\,.
\end{equation}
\end{restatable}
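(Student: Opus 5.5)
The plan is to induct on the length $\ell$ of a homogeneous multiplication sequence computing $f$. The key device is to peel off the \emph{first} essential multiplication, which necessarily has the shape $l\cdot l'$ with $l,l'$ linear forms. Factoring it out of the rest of the computation produces exactly one summand of the shape (linear)$\cdot$(linear)$\cdot$(degree $d-2$). I take $S_{d-2,1,1}(f)$ to be the least $r$ such that $f=\sum_{i=1}^r l_i l_i' g_i$ with $l_i,l_i'$ linear forms and each $g_i$ homogeneous of degree $d-2$, the polynomial analogue of the $(d-2,1,1)$-rank; the argument below should adapt to whatever precise form \cref{sect:mult} gives. The claim to prove is therefore: if a homogeneous sequence of length $\ell$ computes a homogeneous $f$ of degree $d\ge 3$, then $f$ is a sum of at most $\ell$ such terms.

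\emph{Reduction to essential steps.} Let $(f_1,\dots,f_\ell)$ be homogeneous and take the first index $j$. Since $f_j=s_jt_j$ with $s_j,t_j\in\operatorname{span}\{1,X_1,\dots,X_q\}$ and $f_j$ homogeneous, there are two cases. Either $f_j$ already lies in $\operatorname{span}\{1,X_1,\dots,X_q\}$ (for instance, a constant factor occurs or the degree is at most $1$), in which case I can delete the step and re-express later $s_k,t_k$ without it, reducing $\ell$. Or $f_j=l\,l'$ is a product of two linear forms. Homogeneity has to be checked here, since $s_j,t_j$ need not themselves be homogeneous. I will do this by taking homogeneous components: the degree-$2$ part of $s_jt_j$ is the product of the linear parts, and the other components must vanish or be of degree $\le 1$. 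Alternatively I can appeal to the convention that in a homogeneous sequence the factors are homogeneous as well. The base case $\ell=0$ is immediate: $f$ is then affine, and being homogeneous of degree $d\ge 3$ it must be $0$, so $S_{d-2,1,1}(f)=0$.

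\emph{Inductive step (main step).} Suppose $f_1=l\,l'$. I introduce a fresh indeterminate $Y$ of weight $2$ and replay steps $2,\dots,\ell$ in $\mathbb F[X,Y]$ with $f_1$ replaced by $Y$. This gives a weighted-homogeneous sequence of length $\ell-1$ computing a weighted-homogeneous $F(X,Y)$ of weight $d$ with $f(X)=F(X,l\,l')$. Write $F=F(X,0)+Y\cdot G(X,Y)$ with $G$ weighted-homogeneous of weight $d-2$. Then
\[
f \;=\; F(X,0) \;+\; l\,l'\cdot G(X,l\,l'),
\]
and $G(X,l\,l')$ is homogeneous of degree $d-2$ in $X$, so the second summand is one admissible term. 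Setting $Y=0$ throughout the length-$(\ell-1)$ sequence yields an ordinary homogeneous multiplication sequence in $X$ alone that computes $F(X,0)$, which is homogeneous of degree $d$. (If $F(X,0)=0$ we are already done.) By induction $F(X,0)$ is a sum of at most $\ell-1$ admissible terms, hence $S_{d-2,1,1}(f)\le \ell$. Taking $\ell=H(f)$ gives \eqref{eq:strength}.

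The step I expect to need the most care is the bookkeeping in the substitution. I must verify that replaying the sequence with $Y$ in place of $f_1$ stays a valid weighted-homogeneous multiplication sequence, since each $s_k,t_k$ was a combination involving $f_1$ and is now a combination involving $Y$. I must also verify that specializing $Y\mapsto 0$ preserves both the length bound and homogeneity. Both should follow because substitution is a ring homomorphism that respects the (weighted) grading. A subsidiary point is the reduction that makes the first step a product of linear forms; this is where the hypothesis $d\ge 3$ (really $d\ge 2$ together with $d-2\ge 0$) and the degree-$0$/$1$ degenerate steps are dispatched.
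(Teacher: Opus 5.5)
Your proof is correct in outline and gives the same bound, but by a different route from the paper.

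\textbf{Comparison of the two routes.} The paper proves the slightly stronger \cref{lem:seqlin} by peeling off the \emph{last} step $f_\ell=st$, with $s,t$ homogeneous and computed by the shorter sequence. It carries one fixed list of quadrics $\alpha_j\beta_j$, namely the steps in which two linear forms are multiplied. It shows that every homogeneous polynomial of degree at least $2$ computed by $\Phi$ lies in $\sum_j\alpha_j\beta_j\,\mathbb F[X_1,\ldots,X_q]$ with homogeneous cofactors. A new quadric is added only when $\deg s=\deg t=1$; otherwise the inductive step is pure distributivity, $u+\sigma st=\sum_j\hat\alpha_j\hat\beta_j\bigl(\hat g_j^{[u]}+\sigma s\,\hat g_j^{[t]}\bigr)$.

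You instead peel off the \emph{first} multiplication and lift the rest to $\mathbb F[X,Y]$ with a weight-$2$ placeholder. You then split off the $Y$-divisible part and specialize $Y\mapsto 0$ to get a shorter sequence. This is a substitution-style argument working modulo the principal ideal $(l\,l')$, and it makes the ``one summand per removed multiplication'' accounting literal. The paper's route is more elementary, with no auxiliary variable or weighted grading, and gives the uniform statement for all computed polynomials at once. Yours handles one target polynomial at a time and pays for the specialization with extra bookkeeping.

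\textbf{The bookkeeping step needs a different justification.} That $Y\mapsto l\,l'$ and $Y\mapsto 0$ respect the grading does not by itself make the lift weighted-homogeneous. Suppose step $2$ recomputes $f_2=l\,l'$ and a later factor is written as $s_k=X_1+f_1-f_2$ (which equals $X_1$). Its lift $X_1+Y-F_2=X_1+Y-l\,l'$ is not weighted-homogeneous and specializes to $X_1-l\,l'$. The sequence obtained by setting $Y=0$ is then not homogeneous, so your induction hypothesis does not apply to it.

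The remedy is as follows.
\begin{enumerate}
\item Delete steps with $f_k=0$ outright.
\item For each remaining step, fix representations of $s_k,t_k$ that use only earlier $f_i$ of the matching degree, together with $1$ only in degree $0$ and the $X_j$ only in degree $1$. Do the same for the final output $f$.
\end{enumerate}
Such representations always exist: take the appropriate homogeneous component of an arbitrary representation. This works because each $f_i$ is homogeneous, and a nonzero homogeneous product $s_kt_k$ forces $s_k$ and $t_k$ to be homogeneous. With this choice, every lift $F_k$ is weighted-homogeneous of weight $\deg f_k$ by induction on $k$, and the rest of your argument goes through.
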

Finally, we control the $\lambda$-strength from below by the $\lambda$-rank
by routine set-multilinearization, namely for every $d$-tensor $f \in (\mathbb F^m)^{\otimes d}$ and every $\lambda \vdash d$, we have
\begin{equation}
\label{eq:set-multilinearization}
S_\lambda(f)\leq R_\lambda(f)\leq \frac{d!}{\lambda!}\cdot S_\lambda(f)\,.
\end{equation}
We present a proof for completeness in \cref{lem:set-multilinearization} in \cref{sect:mult}. \cref{thm:main} now follows immediately from
\eqref{eq:homogeneization}, \cref{lem:strength} and \eqref{eq:set-multilinearization}. 

To supply a short intuition on the proof of~\cref{lem:strength}, observe that
every homogeneous multiplication sequence $\Phi=(f_1,f_2,\ldots,f_\ell)$ that computes a 
polynomial of degree at least two must contain $1\leq k\leq \ell$ polynomials $f_{i_1},f_{i_2},\ldots,f_{i_k}$ with degree exactly two. 
Furthermore, since $\Phi$ is homogeneous, each $f_{i_j}=\alpha_j\beta_j$
for $j\in[k]$ is the product of two linear forms $\alpha_j$ and $\beta_j$.
By essentially only the distributive law and a simple induction on $\Phi$
(detailed in the proof of \cref{lem:seqlin} in \cref{sect:mult}), 
these linear forms persist over the linear combinations and 
multiplications in $\Phi$; namely, {\em every} homogeneous polynomial $h$ of 
degree $d\geq 2$ computed by $\Phi$ can be written as $h=\sum_{j\in[k]}\alpha_j\beta_jg_j^{[h]}$ for some homogeneous 
polynomials $g_j^{[h]}$ of degree $d-2$ depending on $h$. 
For $d\geq 3$ this yields $S_{d-2,1,1}\leq H$ and thus~\cref{lem:strength}.

The parameter $S_{d-2,1,1}$, which we call the {\em two-slice} strength,
is thus of independent interest because lower bounds on $S_{d-2,1,1}$ imply immediate and potentially nonlinear 
lower bounds on the multiplicative complexity $L$
by combining \eqref{eq:homogeneization} and \cref{lem:strength}. The two-slice strength
$S_{d-2,1,1}$
is also easier to work with than the two-slice rank $R_{d-2,1,1}$ because all the polynomial 
indeterminates have a symmetric role in monomials.
In contrast, for tensors one has to track which of the $d$ tensor modes are being considered. This is, however, 
not to say that the two-slice strength is an easy parameter computationally; we show that 
computing both the slice strength and the two-slice strength of a given polynomial is NP-hard.

\begin{theorem}[NP-hardness of slice strength and two-slice strength]
\label{thm:np-hardness-of-slice-strength-and-two-slice-strength}
When given as input a list of monomials for a degree $d$ homogeneous multilinear polynomial $f$
with coefficients in $\{0,1\}$ over a field of characteristic zero,
\begin{enumerate}
\item 
it is $\mathrm{NP}$-hard to compute the slice strength $S_{d-1,1}(f)$ for $d\geq 3$, and
\item 
it is $\mathrm{NP}$-hard to compute the two-slice strength $S_{d-2,1,1}(f)$ for $d\geq 4$.
\end{enumerate}
\end{theorem}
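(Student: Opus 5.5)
We reduce from an NP-hard graph problem, and the natural candidate is \textsc{Vertex Cover}; compare the simple NP-hardness proof for symmetric slice rank mentioned in the abstract.

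For part (1), take $d=3$ first. Given a graph $G=(V,E)$, form the multilinear cubic
\[ f_G=\sum_{\{u,v\}\in E} X_uX_vZ, \]
or a variant with one auxiliary variable $Z_e$ per edge to keep the coefficients in $\{0,1\}$ and the structure rigid, e.g.\ $f_G=\sum_{e=\{u,v\}} X_uX_vY_e$. Recall that $S_{2,1}(f)$ is the least $r$ such that $f=\sum_{i\le r}\ell_i g_i$ with $\ell_i$ linear forms and $g_i$ quadratics.
\begin{itemize}
\item Upper bound: a vertex cover $C$ gives $f_G=\sum_{u\in C}X_u\cdot(\cdots)$, so $S_{2,1}(f_G)\le \tau(G)$, where $\tau(G)$ is the vertex cover number.
\item Lower bound: suppose $f_G=\sum_{i\le r}\ell_ig_i$. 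Then $f_G$ vanishes on the linear subspace $W=\{\ell_1=\cdots=\ell_r=0\}$, which has codimension at most $r$. We must show that $W$ being contained in the zero set of $f_G$ forces a vertex cover of size at most $r$.
\end{itemize}
Over characteristic zero I would argue as follows. Restricting to $W$ means substituting for $r$ variables linear expressions in the others (after a change of basis). The edge variables $Y_e$ appear linearly with coefficient $X_uX_v$. I would pad the construction so that eliminating any $Y_e$ is never advantageous, for instance by replacing each $Y_e$ by many independent copies, i.e.\ taking $\sum_e\sum_{t\le N}X_uX_vY_{e,t}$ with $N>|V|$. Then a dimension/projection argument (project $W$ onto the $X$-coordinates) shows that the quadratic $\sum_e X_uX_vY_{e,t}$ vanishes on $W$ only if, for each $e$, the product $X_uX_v$ vanishes on the projection of $W$ to $X$-space. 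That projection is a subspace of codimension at most $r$ on which $\prod$-type edge monomials vanish. A subspace on which each edge product $X_uX_v$ vanishes is contained in a union of coordinate hyperplanes, so by irreducibility it lies in $\{X_u=0\}$ or $\{X_v=0\}$ for every edge. Hence it is contained in a coordinate subspace $\{X_c=0: c\in C\}$ with $C$ a vertex cover. Comparing codimensions gives $|C|\le r$. For $d>3$, multiply by fresh variables or insert additional copy-indexed variables per edge, $X_uX_vY_{e,t}^{(1)}\cdots Y_{e,t}^{(d-2)}$, and run the same argument.

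For part (2), use the same idea with $S_{d-2,1,1}$. A decomposition $f=\sum_i\alpha_i\beta_ig_i$ implies that $f$ vanishes on the variety $\bigcap_i\{\alpha_i\beta_i=0\}$, a union of linear subspaces each of codimension at most $r$. Choosing one component reduces to the previous argument on a subspace of codimension at most $r$. For the upper bound, encode edges so that one vertex variable and one partner variable pair up. A natural encoding is $f_G=\sum_{e=\{u,v\}}\sum_t X_uX_vY_{e,t}Y'_{e,t}$, where the upper bound comes from grouping by $u\in C$. The catch is that then $X_u\cdot(\text{something linear})$ has to be exhibited, and the lower bound has to rule out cheaper decompositions that exploit the two linear factors. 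I would instead use the graph $H$ obtained from $G$ and arrange $f=\sum_{e}X_uX_vX_wY_e\cdots$ so that covers by pairs correspond to a hard covering problem. Possibly the simplest route is to reuse part (1) by noting that $S_{d-2,1,1}(f\cdot Z)$ relates to $S_{d-2,1}(f)$ for a fresh variable $Z$, since $Z$ can serve as one of the linear factors.

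The main obstacle is this last reduction: proving rigorously that the two-slice strength of the padded polynomial equals the cover number. This requires the lower bound to handle both the arbitrary linear change of coordinates mixing $X$ and $Y$ variables and the extra freedom of two linear factors. I would attack it with the vanishing-on-subspaces argument combined with generous copy padding.
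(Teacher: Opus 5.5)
Both parts of your proposal have genuine gaps.

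\textbf{Part (1).} The upper bound is fine. Your $f_G=\sum_e X_uX_vY_e$ is in fact the paper's polynomial $\sum_i X^{S_i'}$ for $d=3$, once the $Y_e$ are suitably ordered. The shared-$Z$ variant, and multiplying by one global fresh variable when $d>3$, are useless, since then $S_{d-1,1}=1$.

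The lower bound rests on the claim that if $f_G$ vanishes on $W$, then every $X_uX_v$ vanishes on $\pi_X(W)$. This is false exactly where you need it. Let $G$ be an edge $ab$ disjoint from a star with center $c$. Then $f_G$ vanishes on $W=\{\sum_tY_{ab,t}=0,\ X_c=0\}$, which has codimension $2=\tau(G)$. Yet $X_aX_b$ does not vanish on $\pi_X(W)=\{X_c=0\}$. Padding cannot help, because $\sum_tX_uX_vY_{e,t}=X_uX_v\sum_tY_{e,t}$ is just the unpadded polynomial after a linear change of variables.

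What is missing is control over subspaces that mix vertex coordinates, edge coordinates and linear combinations of them. The paper gets this from a Sawin--Tao-type lemma:
\begin{itemize}
\item Take a reduced row-echelon basis $u_1,\dots,u_k$ of the subspace on which $f$ vanishes.
\item By polarization, $\langle f,u^m\rangle=0$ for all $m$.
\item The pivot composition $a^{[m]}$ is the Gale-minimum of $\vsupp(u^m)$, so no $a^{[m]}$ lies in $\max\vsupp f$. Hence the $q-k\le s$ non-pivot coordinates hit $\max\vsupp f$.
\item Ordering the per-edge variables in reverse of a linear extension of the Gale order makes $\vsupp f$ an antichain, whence $S_{d-1,1}(f)=\mathrm{hs}(\vsupp f)$.
\end{itemize}

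Your projection idea can be rescued for $d=3$, but as a rank statement rather than a vanishing statement. Set $P=\pi_X(W)$ and $W_0=W\cap(\{0\}\times\mathbb F^{E})$. Then $\operatorname{codim}W=\operatorname{codim}P+\operatorname{codim}W_0$, and $(x_ux_v)_{e}\perp W_0$ for all $x\in P$. So the restrictions $X_uX_v|_P$ span a space of dimension at most $\operatorname{codim}W_0$. In a reduced echelon basis of $P$, edges with both endpoints among the pivots restrict to distinct monomials, so there are at most $\operatorname{codim}W_0$ of them. The $\operatorname{codim}P$ non-pivots, together with one endpoint of each such edge, form a vertex cover of size at most $\operatorname{codim}W\le r$.

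\textbf{Part (2).} This remains unproved, as you acknowledge. Your component argument amounts to $S_{d-1,1}\le S_{d-2,1,1}$. That is the right lower-bound tool, but you never exhibit a polynomial with a matching upper bound.
\begin{itemize}
\item In $\sum_e\sum_tX_uX_vY_{e,t}Y'_{e,t}$, the terms at a cover vertex $u$ share only the one linear factor $X_u$.
\item For $fZ$, the upper bound $S_{d-2,1,1}(fZ)\le S_{d-2,1}(f)$ holds, but the lower bound is unaddressed. The component argument only yields $S_{d-1,1}(fZ)=1$, since $Z$ itself serves as the slice.
\end{itemize}

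The missing idea is to give each vertex two variables. With $\bar u=2n+1-u$, consider $f=\sum_{\{u,v\}\in E}X_uX_vX_{\bar u}X_{\bar v}$. For every vertex cover $C$ it satisfies $f=\sum_{c\in C}X_cX_{\bar c}\,g_c$. The mirrored indexing makes $\vsupp f$ a Gale antichain with $\mathrm{hs}(\vsupp f)=\tau(G)$. The hitting-set lemma then gives $\tau(G)\le S_{3,1}(f)\le S_{2,1,1}(f)\le\tau(G)$. Larger $d$ follows by padding each monomial with fresh per-edge variables.
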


\begin{remark}
    For $d=2$, the slice strength $S_{d-1,1}(f)=S_{1,1}(f)$ is reducible to computing the rank of the quadratic form $f$.
    For $d=3$, the two-slice strength $S_{d-2,1,1}(f)=S_{1,1,1}(f)$ is also known as the Chow rank of the cubic $f$. It seems likely that this quantity is hard to compute as well, but as far as we are aware any proof of this would have to pass through highly involved arguments along the lines of Håstad, Shitov and others~\cite{Hastad90,SchaeferS18,shitov2016hardtensorrank,Swernofsky18}. For the sake of simplicity and being self-contained, we do not attempt a proof here.
\end{remark}

Our proof of Theorem~\ref{thm:np-hardness-of-slice-strength-and-two-slice-strength} 
essentially follows the ideas of Sawin and Tao~\cite{Tao2016Capset,TaoSawin2016SliceRank}.
Using a characterization of slice rank of tensors ($R_{d-1,1}$ in our notation) via orthogonal linear subspaces, they formulate an antichain-condition on the support of a tensor which enables an interpretation of determining its slice rank as a combinatorial optimization problem. This correspondence was employed by Bläser et al.~\cite{BlaserILPS21} to prove $\mathrm{NP}$-hardness of computing that quantity for $d=3$. (For $d=2$, the slice rank $R_{d-1,1} = R_{1,1}$ is simply matrix rank, which can be computed in polynomial time.) Our proof of \cref{thm:np-hardness-of-slice-strength-and-two-slice-strength} adapts the original Sawin--Tao antichain-condition to the symmetric case, that is, the slice strength $S_{d-1,1}$, and correspondingly proves $\mathrm{NP}$-hardness of this quantity via minimum hitting sets for antichains in Gale order of degree sequences. We
also extend the ideas to show NP-hardness of the two-slice strength $S_{d-2,1,1}$ for $d\geq 4$.
It should be noted that the proof for slice strength turns out much simpler than the proof for slice rank. Intuitively, this can be attributed to the fact that slice strength symmetrizes over the different choices of which among the $d$ modes of the tensor to select as the ``slices'' per definition, which makes the characterizations simpler to work with and state.

The rest of this paper is organized as follows. 
After an overview of related work, we prove \Cref{lem:strength} in  \cref{sect:mult}, and \Cref{thm:np-hardness-of-slice-strength-and-two-slice-strength} in \cref{sect:hard}. 

\subsection*{Related Work}
Tao~\cite{Tao2016Capset} introduced slice rank as a variant of the standard tensor rank in the context of the cap-set problem, and further investigated it in its own right together with Sawin~\cite{TaoSawin2016SliceRank}. Naslund defined and studied partition ranks~\cite{Naslund2020}, which systematically generalize slice rank and embed it into a structured collection of related notions of rank.
Analogues of tensor rank defined for (homogeneous) polynomials, often referred to as symmetric tensors in this context, have been studied.
This includes the classic Waring and Chow rank~\cite{torrance2017generic}, and the so-called {strength} (or Schmidt rank) of a polynomial.
Strength was introduced under this name in the proof of Stillman's conjecture by Ananyan and Hochster~\cite{AnanyanH20a}.
Its definition can be restricted to obtain slice strength and other variants with prescribed degrees~\cite{bik2022strength}.
These variants of strength for polynomials have been used to obtain superlinear, unconditional lower bounds in the restricted model of homogeneous algebraic branching programs~\cite{GesmundoGIL22}.

In general, the relation between quantities defined on polynomials and their tensor analogues is intricate. Still, some relations between the complexity of tensors (interpreted as set-multilinear polynomials) with respect to measures of complexity defined on polynomials have been established in algebraic complexity theory.
For instance, Raz~\cite{Raz2013} deduced lower bounds on arithmetic formula size from explicit constructions of high-rank tensors (see also \cite{ChillaraKSV16}) via a set-multilinearization argument.
Set-multilinearization also played a role in the recent breakthrough of Limaye, Srinivasan and Tavenas~\cite{LimayeST2024} on unconditional lower bounds against constant-depth arithmetic circuits.
An attempt at transporting some upper bound results from Strassen's bilinear theory to high-degree tensors and circuits was recently made in~\cite{BrandCKLOSW26}.

\section{Two-Slice Strength and Homogeneous Multiplicative Complexity}

\label{sect:mult}

This section presents short technical preliminaries and proves our main lemma (\cref{lem:strength})
connecting two-slice strength and homogeneous multiplicative complexity. We conclude by 
connecting $\lambda$-rank and $\lambda$-strength via a routine set-multilinearization 
(\cref{lem:set-multilinearization}).

\subsection*{Preliminaries---tensors as polynomials, \texorpdfstring{$\lambda$-rank}{lambda-rank}, and \texorpdfstring{$\lambda$-strength}{lambda-strength}}
\subsubsection*{Notation.} Let $\mathbb N=\{0,1,\ldots\}$ and let $\mathbb F$ be a field. 
We work with tensors over $\mathbb F$ in coordinates
and equip the vector space $\mathbb F^m$ with the standard basis $e_1,\ldots,e_m$. 
We write $\mathbb F[X_1,\ldots,X_q]$ for the $\mathbb F$-algebra of polynomials in the indeterminates $X_1,\ldots,X_q$. We write $\mathbb F[X_1,\ldots,X_q]_d$ for the linear subspace of homogeneous polynomials of degree $d\in\mathbb N$. A natural basis for these latter spaces consists of the $\binom{q+d-1}{d}$ monomials $X^a$ for $a \in \mathbb N^q$ and degree $d=\sum_{j=1}^q a_j$. 
We write $\deg f$ for the largest degree of a monomial appearing in $f$.
A related object is the set of tensors of order $d$ over $\mathbb F$, written $(\mathbb F^m)^{\otimes d}$. Its natural basis consists of the $m^d$ tensors $e_{i_1} \otimes \cdots \otimes e_{i_d}$ with $i_j \in [m]$ for $j \in [d].$

\subsubsection*{Tensors as Polynomials.} It is customary to understand polynomials as symmetric tensors via the identification 
\[
X_{i_1}\cdots X_{i_d} \mapsto \sum_{\sigma \in \mathfrak S_d} e_{i_{\sigma(1)}} \otimes \cdots \otimes e_{i_{\sigma(d)}},
\]
where $\mathfrak S_d$ is the symmetric group on $[d]$. 
We take a different (but also standard) approach, and understand tensors as specific 
multilinear polynomials.
Namely, we identify $(\mathbb F^m)^{\otimes d}$ with homogeneous degree-$d$ polynomials over $m\cdot d$ indeterminates $X^{(j)}_i$ for $j \in [d]$ and $i \in [m]$, as the image of $(\mathbb F^m)^{\otimes d}$ under the (injective) map sending 
\[
e_{i_1} \otimes \cdots \otimes e_{i_d} \mapsto X^{(1)}_{i_1} \cdots X^{(d)}_{i_d},
\]
and extended by linearity. 
Conversely, for $\emptyset\neq P\subseteq [d]$ we say that a monomial is $P$-{\em multilinear} if 
the monomial is of the form $\prod_{j\in P}X^{(j)}_{i_j}$ for some $i_j\in[m]$ for $j\in P$;
a polynomial is $P$-{\em multilinear} if all of its monomials are $P$-multilinear. 
The $d$-tensors in $(\mathbb F^m)^{\otimes d}$ are thus precisely 
the $[d]$-multilinear polynomials, which we also call {\em set-multilinear} polynomials.

\subsubsection*{$\lambda$-Rank and $\lambda$-Strength.} A {\em set partition} of $[d]$ is a set $\{P_1,\ldots,P_k\}$ with $\emptyset\neq P_i\subseteq[d]$ and $P_i\cap P_j=\emptyset$ for all $1\leq i<j\leq k$ and $\cup_{i=1}^k P_i=[d]$. 
For a nonempty set $\mathcal{P}$ of set partitions of $[d]$, we say that a
tensor $g\in(\mathbb{F}^m)^{\otimes d}$ admits a {\em $\mathcal{P}$-factorization} if 
there exists a set partition $\{P_1,\ldots,P_k\}\in\mathcal{P}$ with
$g=g_1\cdots g_k$ such that $g_i$ is $P_i$-multilinear for all $1\leq i\leq k$.
The \emph{$\mathcal{P}$-rank} of a tensor $f\in(\mathbb{F}^m)^{\otimes d}$ is
the minimum $r\in\mathbb{N}$ such that $f = f_1 + \cdots + f_r$ and each $f_i$ admits a
$\mathcal{P}$-factorization for all $1\leq i\leq r$.

For a positive integer $d$, we write $\lambda\vdash d$ to indicate that 
$\lambda=(\lambda_1,\ldots,\lambda_k)$ with
integers $\lambda_1\geq\cdots\geq\lambda_k\geq 1$ and $\sum_{i=1}^k\lambda_i=d$ is an integer partition of $d$ into $k$ parts 
for some positive integer $k$. We write $\lambda!=(\lambda_1!)\cdots(\lambda_k!)$.
For a tensor $f\in (\mathbb F^m)^{\otimes d}$, 
the \emph{$\lambda$-rank} $R_\lambda(f)$ of $f$ is the $\mathcal{P}$-rank of $f$, 
where $\mathcal{P}$ consists of all set partitions $\{P_1,\ldots,P_k\}$ of $[d]$ with 
$|P_i|=\lambda_i$ for all $1\leq i\leq k$. 
For $\lambda\vdash d$ we say that a polynomial $g$ admits a {\em $\lambda$-factorization} if $g=g_1\cdots g_k$ such that $g_i$ is homogeneous of degree $\lambda_i$ 
for all $1\leq i\leq k$.
The \emph{$\lambda$-strength} $S_\lambda(f)$ of a homogeneous polynomial $f$ of degree $d$ 
is the minimum $s\in\mathbb{N}$ such that $f = f_1 + \cdots + f_s$ and each $f_i$ admits a
$\lambda$-factorization for all $1\leq i\leq s$.

\begin{remark}
\label{rem:refinement}
For two partitions $\lambda,\mu\vdash d$ in {\em refinement order} $\lambda\sqsupseteq \mu$, such that $\lambda$ can be obtained from $\mu$ by replacing zero or more parts with their sum, 
it is immediate that both $R_\lambda\leq R_\mu$ and $S_\lambda\leq S_\mu$.
For $d=4$ in particular, we observe the two refinement chains 
\[
4\quad \begin{matrix}\rotatebox[origin=c]{30}{$\sqsupseteq$} \\[6pt] \rotatebox[origin=c]{-30}{$\sqsupseteq$} \end{matrix}\quad \begin{matrix} 3+1 \\[4.5pt]  ~ \\[4.5pt]  2+2\end{matrix}\quad \begin{matrix}\rotatebox[origin=c]{-30}{$\sqsupseteq$} \\[6pt] \rotatebox[origin=c]{30}{$\sqsupseteq$} \end{matrix}\quad 2+1+1 \quad \sqsupseteq \quad 1+1+1+1\,,
\]
which together with \eqref{eq:pip} imply
that the tightest lower-bound control on the multiplicative complexity 
is given via $R_{2,1,1}$ by \Cref{thm:main}, and on the homogeneous multiplicative
complexity via $S_{2,1,1}$ by \Cref{lem:strength}.
Given that any advancement on unconditional lower bounds for, e.g., the hyperclique $4$-tensor \eqref{eq:hyp}, would constitute major progress,
this already motivates our focus on two-slice rank and strength.
In addition, also for degrees $d\geq 5$, the two-slice rank $R_{d-2,1,1}$ and strength $S_{d-2,1,1}$ appear as the simplest non-trivial candidate measure arising from partition ranks that can also provide general lower bounds on multiplicative complexity.
\end{remark}

\subsection*{Two-Slice Decomposition of Homogeneous Multiplication Sequences}

We now prove our main lemma relating $(d-2,1,1)$-strength and 
homogeneous multiplicative complexity; we restate the lemma below for convenience.

\strlemma*

We will in fact prove the following slightly more general decomposition lemma,
and observe that \Cref{lem:strength} follows from \Cref{lem:seqlin} by taking $\Phi$ to be a homogeneous multiplication sequence of minimal length $\ell=H(f)$ computing $f$.

\begin{lemma}[Two-slice decomposition of homogeneous multiplication sequences]\label{lem:seqlin}
For every homogeneous multiplication sequence $\Phi = (f_1,\ldots,f_\ell)$, there exist linear forms $\alpha_1,\beta_1,\ldots,\alpha_k,\beta_k$ with $k \leq \ell$ such that every homogeneous 
polynomial $h$ computed by $\Phi$ with $\deg(h) \ge 2$ admits a decomposition
\[
h=\sum_{i=1}^k \alpha_i\beta_i g_i^{[h]}\,,
\]
where each polynomial $g_i^{[h]}$ is homogeneous and satisfies $\deg(g_i^{[h]}) \leq \deg(h)-2$.
\end{lemma}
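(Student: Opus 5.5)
The plan is to choose the linear forms once and for all from the degree-two entries of $\Phi$, and then prove the decomposition by induction along the sequence. Since $\Phi$ is homogeneous, each $f_j=s_jt_j$ with $f_j$ homogeneous of degree $d_j$. We may take $s_j,t_j$ homogeneous of degrees $a_j+b_j=d_j$: replace $s_j,t_j$ by their top-degree homogeneous components. The product of these components is the degree-$d_j$ part of $s_jt_j$, which is $f_j$ itself because $s_jt_j=f_j$ is homogeneous of degree $d_j$. This step needs $f_j\neq 0$; zero entries can be dropped or handled trivially. The components still lie in $\operatorname{span}\{1,X_1,\ldots,X_q,f_1,\ldots,f_{j-1}\}$, because that span is spanned by homogeneous elements and is therefore closed under taking homogeneous components. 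For every $j$ with $d_j=2$, the two factors are either two linear forms or a constant times a quadric. In the first case we set $\alpha\beta=s_jt_j$ and add the pair to our list. In the second case $f_j$ is a scalar multiple of a previously computed quadric and contributes nothing new. The resulting list has $k\le\ell$ pairs.

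Next we prove the claim $(\ast)$ by induction on $j$: every homogeneous element $h$ of degree $\ge2$ in $V_j:=\operatorname{span}\{1,X_1,\ldots,X_q,f_1,\ldots,f_j\}$ can be written as $\sum_i\alpha_i\beta_ig_i$ with each $g_i$ homogeneous of degree $\deg h-2$. Here $\alpha_i\beta_ig_i$ should be read with $g_i=0$ allowed, so this degree condition matches "$\le \deg(h)-2$" in the statement. First we reduce to the generators. A homogeneous $h\in V_j$ of degree $e\ge2$ equals the degree-$e$ component of its expression, hence a linear combination of those $f_{j'}$ with $d_{j'}=e$, because $1$ and the $X$'s have degree $<2$. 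So it suffices to show each $f_{j'}$ with $d_{j'}\ge2$ has such a form, with $g_i$ homogeneous of degree $d_{j'}-2$, and then take linear combinations.

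The inductive step for $f_j=s_jt_j$ with $d_j\ge 2$ splits into three cases.
\begin{enumerate}
\item If $d_j=2$ and both factors are linear, then $f_j$ is itself one of the $\alpha_i\beta_i$, with $g_i=1$.
\item If one factor has degree $0$, then $f_j$ is a scalar multiple of a homogeneous element of $V_{j-1}$ of degree $d_j\ge2$, and we apply the induction hypothesis.
\item Otherwise, without loss of generality $\deg s_j\ge2$. By induction, $s_j=\sum_i\alpha_i\beta_ig_i$ with $\deg g_i=\deg s_j-2$. Then $f_j=\sum_i\alpha_i\beta_i(g_it_j)$, and each $g_it_j$ is homogeneous of degree $d_j-2$. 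This is just the distributive law.
\end{enumerate}
The remaining sub-case is $\deg s_j=\deg t_j=1$ with $d_j=2$, which is exactly case (i), so the cases are exhaustive.

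The only delicate point is the normalization in the first paragraph: we must ensure $s_j,t_j$ can be assumed homogeneous and still lie in the allowed span. This is also where the homogeneity of $\Phi$ is genuinely used. Once that is in place, the induction is routine bookkeeping on degrees. Finally, "computed by $\Phi$" means lying in $V_\ell$, so $(\ast)$ at $j=\ell$ yields the lemma.
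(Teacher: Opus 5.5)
Your proof is correct and follows essentially the same route as the paper. Both argue by induction along the sequence, add a new pair $(\alpha,\beta)$ exactly when a step multiplies two linear forms, and otherwise push the other factor into the coefficients $g_i$ via the distributive law. Your explicit justification that the factors $s_j,t_j$ may be taken homogeneous, using top-degree components and the fact that the span is graded, fills in a step the paper simply asserts.
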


\begin{proof}
We proceed by induction on $\ell$. If $\ell = 0$, then $\Phi$ by definition only computes constants and linear forms--both of degree less than $2$, so there is nothing to prove.
So suppose that $\ell \geq 1$, write $\hat \Phi=(f_1,\ldots,f_{\ell-1})$, and let $\hat\alpha_1,\hat\beta_1,\ldots,\hat\alpha_{k},\hat\beta_{k}$ be 
the linear forms afforded by the inductive hypothesis applied to $\hat \Phi$ with $k \leq \ell-1$.
Now study the polynomial $f_\ell$ in $\Phi$. We have that $f_\ell = st$ for some homogeneous
polynomials $s,t$ computed by $\hat \Phi$. Furthermore, we may assume $1 \leq \deg(s) \leq \deg(t)$.
We now split into two cases. In the first case, suppose that $\deg(s)=\deg(t)=1$.
We define
\[
\alpha_j=\hat\alpha_j\,,\quad\beta_j=\hat\beta_j\quad \text{for $1 \leq j \leq k$\quad and}
\quad
\alpha_{k+1}=s\,,\quad \beta_{k+1}=t\,.
\]
Now let $h$ be an arbitrary homogeneous polynomial of degree $d \geq 2$ computed by $\Phi$. 
Without loss of generality we have $h=u+\sigma f_\ell$ for some homogeneous polynomial
$u$ computed by $\hat \Phi$ and some $\sigma\in\mathbb F$. 
Now, since $u$ is computed by $\hat \Phi$, we have 
$u=\sum_{j=1}^{k} \hat\alpha_j\hat\beta_j\hat g_j^{[u]}$ 
by the inductive hypothesis.
Choose $g_j^{[h]} = \hat g_j^{[u]}$ for all $1 \leq j \leq k$
as well as $g_{k+1}^{[h]} = \sigma$. 
These choices obviously satisfy $\deg(g_j^{[h]}) \leq \deg(h)-2$ for all $1 \leq j \leq k+1$, 
and moreover
\[
h = u + \sigma f_\ell = \sum_{j=1}^{k} \hat\alpha_j\hat\beta_j \hat g_j^{[u]} + \alpha_{k+1} \beta_{k+1} g_{k+1}^{[h]} = \sum_{j=1}^{k+1} \alpha_j \beta_j g_j^{[h]}\,.
\]
This completes the inductive step in the first case. In the second case, 
suppose that $\deg(t) \geq 2$.
We define
\[
\alpha_j=\hat\alpha_j\,,\quad\beta_j=\hat\beta_j\quad \text{for $1 \leq j \leq k$}\,.
\]
Now let $h$ be an arbitrary homogeneous polynomial of degree $d \geq 2$ computed by $\Phi$. 
Without loss of generality we have $h=u+\sigma f_\ell=u+\sigma st$ for some homogeneous polynomial
$u$ computed by $\hat \Phi$ and some $\sigma\in\mathbb F$.
Now choose $g_j^{[h]} = \hat g_j^{[u]} + \sigma s \hat g_j^{[t]}$ for $1 \leq j \leq k$. 
By the inductive hypothesis applied to $u$ and $t$, it follows that
\[
h 
= u + \sigma st
=\sum_{j=1}^{k} \hat\alpha_j\hat\beta_j\hat g_j^{[u]} + \sigma s\sum_{j=1}^{k} \hat\alpha_j\hat\beta_j\hat g_j^{[t]}
=\sum_{j=1}^{k} \hat\alpha_j\hat\beta_j\bigl(\hat g_j^{[u]} + \sigma s\hat g_j^{[t]}\bigr)
=\sum_{j=1}^{k} \alpha_j\beta_jg_j^{[h]}\,.
\]
Moreover, since $u$ and $t$ are homogeneous, we may replace each $\hat g_j^{[u]}$ by its homogeneous component of degree $d-2$, and each $\hat g_j^{[t]}$ by its homogeneous component of degree $\deg(t)-2$, without changing the decompositions of $u$ and $t$. Hence we may assume that $\hat g_j^{[u]}$ 
and $\sigma s \hat g_j^{[t]}$ are both homogeneous of degree $d-2$, and therefore $g_j^{[h]}$ 
is homogeneous as well. This completes the inductive step in the second case and thus the proof.
\end{proof}

\begin{remark} \label{rem:random}
    Note that in a two-slice strength decomposition of elements of $\mathbb F[X_1,\ldots,X_q]_d$ into $s$ parts, we have $s \cdot D(q,d)$ degrees of freedom, where $D(q,d) = 2q + \binom{d+q-3}{d-2}$.
    In contrast, elements of $\mathbb F[X_1,\ldots,X_q]_d$ have $M(q,d)  = \binom{d+q-1}{d}$ degrees of freedom.
    Hence we heuristically expect $s \geq M(q,d)/D(q,d)$ to hold for a generic choice 
    $f\in \mathbb F[X_1,\ldots,X_q]_d$ (where we assume $\mathbb F$ is infinite).
    For a constant $d\geq 3$ we have $D(q,d) = \Omega(q^{d-2})$ and $M(q,d) = O(q^d)$, so that
    heuristically $s = \Omega(q^2)$ follows. For constant $d\geq 3$ we can similarly heuristically conclude $R_{d-2,1,1}(f)=\Omega(m^2)$ 
    for a generic choice $f\in(\mathbb F^m)^{\otimes d}$.
    These arguments can likely be made rigorous, but this is beyond the scope of the present work.
\end{remark}

\subsection*{Set-Multilinearization}

The following lemma relates $\lambda$-strength and $\lambda$-rank of $d$-tensors, proving \cref{eq:set-multilinearization}; we supply a proof for completeness.

\begin{lemma}[Set-multilinearization of $d$-tensors]
\label{lem:set-multilinearization}
For every $d$-tensor $f \in (\mathbb F^m)^{\otimes d}$ and every $\lambda \vdash d$, we have
\[
S_\lambda(f)\leq R_\lambda(f)\leq \frac{d!}{\lambda!}\cdot S_\lambda(f)\,.
\]
\end{lemma}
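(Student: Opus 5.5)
The first inequality is immediate from the definitions. Let $f=\sum_{i=1}^r f_i$ be a decomposition witnessing $R_\lambda(f)=r$. Each $f_i$ admits a $\mathcal{P}$-factorization $f_i=g_{i,1}\cdots g_{i,k}$, where $g_{i,j}$ is $P_j$-multilinear and $|P_j|=\lambda_j$. A $P_j$-multilinear polynomial is homogeneous of degree $|P_j|=\lambda_j$. After reordering the factors, each $f_i$ therefore admits a $\lambda$-factorization, and so $S_\lambda(f)\le r$.

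For the second inequality, start from a decomposition $f=\sum_{i=1}^s g_{i,1}\cdots g_{i,k}$ with each $g_{i,j}$ homogeneous of degree $\lambda_j$ in the variables $X^{(c)}_a$. Apply to both sides the linear projection $\pi$ onto the span of set-multilinear monomials, that is, the map that keeps exactly the monomials containing one variable from each colour $c\in[d]$. Since $f$ is set-multilinear, $\pi(f)=f$, and it remains to expand $\pi(g_{i,1}\cdots g_{i,k})$.

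For a set $P\subseteq[d]$, write $g^{P}$ for the $P$-multilinear part of $g$, namely the sum of the terms of $g$ whose monomials are $P$-multilinear. The key claim is
\[
\pi(g_1\cdots g_k)=\sum_{(P_1,\ldots,P_k)} g_1^{P_1}\cdots g_k^{P_k},
\]
where the sum runs over ordered tuples of disjoint sets with $|P_j|=\lambda_j$ whose union is $[d]$. To verify it, expand each $g_j$ into monomials and consider a product of one monomial $\mu_j$ from each $g_j$. This product is set-multilinear exactly when every colour appears exactly once overall. Since the total degree is $d$, that forces each $\mu_j$ to be $P_j$-multilinear for a uniquely determined colour set $P_j$ with $|P_j|=\deg\mu_j=\lambda_j$, and these sets partition $[d]$. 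Conversely, every such choice of monomials yields a set-multilinear product. Grouping the terms by the tuple $(P_1,\ldots,P_k)$ gives the claim.

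Each summand $g_1^{P_1}\cdots g_k^{P_k}$ admits a $\mathcal{P}$-factorization, and the number of ordered tuples is the multinomial coefficient $d!/\lambda!$. This gives $R_\lambda(f)\le \frac{d!}{\lambda!}\, s$. One could sharpen this when $\lambda$ has repeated parts, but the stated bound does not need it.

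The main care point is the bookkeeping in the key claim: it must be the uniqueness of the colour sets $P_j$ (coming from the degree count) that makes the grouping into $d!/\lambda!$ summands exact, with no cross terms left over and no double counting.
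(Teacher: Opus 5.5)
Your proof is correct and takes the same route as the paper. Both arguments expand the product of the factors in a $\lambda$-factorization, keep only the terms indexed by ordered tuples $(P_1,\ldots,P_k)$ partitioning $[d]$ with $|P_j|=\lambda_j$ (which is where set-multilinearity of $f$ is used), and count $d!/\lambda!$ such terms; the paper first reduces to $S_\lambda(f)=1$ by subadditivity, which is equivalent to your treating all $s$ summands at once. Your linear projection $\pi$ onto set-multilinear monomials is a cleaner justification than the paper's ``without loss of generality'' step and its remark that the other terms ``cancel eventually'', because it also disposes of monomials in the $g_j$ that repeat a colour.
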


\begin{proof}
    The lower bound $S_\lambda(f)\leq R_\lambda(f)$ is immediate.
    By sub-additivity of $R_\lambda$, without loss of generality we may assume that $S_{\lambda}(f) = 1$, and show that $R_{\lambda}(f) \leq d!/\lambda!$.
    Suppose that $\lambda$ has $k$ parts, and assume $f$ admits the $\lambda$-factorization 
    $f=g_1\cdots g_k$. Since $f$ is a tensor and thus set-multilinear, without loss of generality 
    we may assume that all
    indeterminates of all monomials of each $g_i$ have degree at most one for $1\leq i\leq k$.
    For a subset $P \subseteq [d]$ of size $\lambda_i$, we write $g_i|_P$ for the polynomial $g_i$ restricted to only monomials of the form $\prod_{j \in P} X^{(j)}_{i_j}$ for some choice of $i_j\in[m]$ for all $j \in P$. (Equivalently, set $X^{(j)}_\ell = 0$ for all $\ell\in[m]$ and
    $j \in[d]\setminus P$ in $g_i$ to obtain $g_i|_P$.) Since $g_i$ is homogeneous of degree $\lambda_i$, we have 
    \[
    g_i = \sum_{\substack{P \subseteq [d] \\ |P| = \lambda_i}} g_i|_P\,.
    \]
    It follows that
    \[
    g_1 \cdots g_k = \prod_{i=1}^k \sum_{\substack{P \subseteq [d] \\ |P| = \lambda_i}} g_i|_P = \sum_{\substack{P_1,\ldots,P_k \subseteq [d] \\ \forall i\colon |P_i| = \lambda_i}} \prod_{i=1}^k g_i|_{P_i} = \sum_{\substack{P_1,\ldots,P_k \subseteq [d] \\ \forall i\colon |P_i| = \lambda_i \\ \forall i \neq j\colon P_i \cap P_j = \emptyset}} \prod_{i=1}^k g_i|_{P_i}\,,
    \]
    where the last equality stems from the fact that $f$ is by assumption a tensor and hence set-multilinear, so that all terms in the summation that do not come from a set partition of $[d]$ have to cancel eventually. 
    Now, per definition, $R_{\lambda}(\prod_{i=1}^k g_i|_{P_i}) = 1$ whenever the sets $P_i$ partition $[d]$ with $|P_i| = \lambda_i$, and the bound follows from the fact that the final rightmost sum can comprise at most $d!/\lambda!$ terms.
\end{proof}

\section{NP-Hardness of Slice Strength and Two-Slice Strength}

\label{sect:hard}

Throughout this section we assume that $\mathbb F$ is a field of characteristic zero.
This section proves \cref{thm:np-hardness-of-slice-strength-and-two-slice-strength}
in two parts, namely \cref{lem:np-hard-slice} and \cref{lem:np-hard-two-slice} below.

It will be convenient to work with monomials over the indeterminates $X_1,\ldots,X_q$ 
for $q\geq 1$ using the corresponding degree sequences
in $\mathbb N^q$ and a partial order $\preceq$ for such sequences. 
We start with short combinatorial preliminaries towards this end. 
A {\em composition} of $d\in\mathbb N$ into $q$ parts is 
a $q$-tuple $a=(a_1,\ldots,a_q)\in\mathbb N^q$ 
with $|a|=d=\sum_{j\in[q]}a_j$. Each such composition corresponds to
a unique monomial $X^a=\prod_{j\in [q]}X_j^{a_j}$ with {\em degree sequence}
$a$ and vice versa. We write $a!=(a_1!)\cdots(a_q!)$.
For two compositions $a,b\in\mathbb N^q$ we write $a\wedge b$ for the
elementwise minimum of the two compositions. 
We tacitly identify each subset
$S\subseteq[q]$ with its indicator-composition 
$S=(\mathbf 1_{j\in S}:j\in [q])\in\{0,1\}^q$
and write $X^S=\prod_{j\in S}X_j$ for the corresponding multilinear monomial.
We say that $a,b\in\mathbb N^q$
are in {\em Gale order} and write $a\preceq b$ if $\sum_{i=1}^j a_j\geq\sum_{i=1}^j b_j$ holds for all $j\in [q]$. Gale order is additive; namely, for $a,a',b,b'\in\mathbb N^q$ it holds that $a\preceq b$ and $a'\preceq b'$ imply $a+a'\preceq b+b'$.
We say that a set $H\subseteq[q]$ is a {\em hitting set} of $A\subseteq\mathbb N^q$ if $a\wedge H\neq (0,\ldots,0)$ holds for all $a\in A$. 
We write $\mathrm{hs}(A)$ for the minimum $k\in\mathbb N$ such that there
is a hitting set $H\subseteq[q]$ of $A$ with $|H|=k$.

The {\em support} $\vsupp f\subseteq\mathbb N^q$ of a polynomial $f\in\mathbb F[X_1,\ldots,X_q]$ consists of the degree sequences 
$a\in\mathbb{N}^q$ of all the 
monomials $X^a$ of $f$ with a nonzero coefficient.

\begin{lemma}[Hitting-set upper bound for slice strength] \label{lem:hs-ub}
    For all $f\in \mathbb F[X_1,\ldots,X_q]_d$ it holds that $S_{d-1,1}(f) \leq \mathrm{hs}(\vsupp(f))$.
\end{lemma}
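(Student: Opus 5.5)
Let $H\subseteq[q]$ be a hitting set of $\vsupp(f)$ with $|H|=\mathrm{hs}(\vsupp(f))$. The plan is to turn $H$ directly into a slice-strength decomposition of $f$ with $|H|$ summands, each of the form $X_j\cdot g_j$ with $g_j$ homogeneous of degree $d-1$. Every such summand admits a $(d-1,1)$-factorization, since $X_j$ is homogeneous of degree $1$. This gives $S_{d-1,1}(f)\le |H|$.

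The construction assigns each monomial of $f$ to one variable of $H$ that divides it. Fix an ordering $H=\{j_1,\ldots,j_k\}$. For each $a\in\vsupp f$ we have $a\wedge H\neq(0,\ldots,0)$, so there is some $j\in H$ with $a_j\geq 1$. Let $\pi(a)$ be the first such index in the ordering. Writing $f=\sum_{a\in\vsupp f} c_a X^a$, we set
\[
g_{j}=\sum_{a\in\vsupp f,\ \pi(a)=j} c_a X^{a-e_j}\qquad (j\in H),
\]
where $e_j$ is the $j$-th unit composition. Since $a_j\geq1$, $a-e_j\in\mathbb N^q$ with $|a-e_j|=d-1$. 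Hence each $g_j$ is a homogeneous polynomial of degree $d-1$ (possibly zero), and $f=\sum_{j\in H}X_j g_j$, because the map $\pi$ partitions $\vsupp f$.

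Two routine points remain. Summands with $g_j=0$ can be dropped, which only decreases the count. The degenerate case $d=0$, or $f=0$, should be handled separately or excluded, since $S_{d-1,1}$ presupposes $d\ge1$. For $f=0$, both sides may be $0$ with the empty hitting set, and the empty sum gives $S_{d-1,1}(0)=0$.

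There is no real obstacle here. The only thing to check is that the definition of a $\lambda$-factorization with $\lambda=(d-1,1)$ is satisfied by $g_j\cdot X_j$, and it is, since both factors are homogeneous of the required degrees.
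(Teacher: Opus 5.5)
Your proof is correct and matches the paper's argument. The paper likewise partitions $\vsupp(f)$ into classes $A_1,\ldots,A_k$ with $X_{i_j}$ dividing every monomial indexed by $A_j$, and then factors $f_j = X_{i_j} g_j$; your first-index rule $\pi$ is just one concrete choice of that arbitrary partition, and your remarks on zero summands and $f=0$ are harmless bookkeeping the paper leaves implicit (strictly, $(d-1,1)\vdash d$ needs $d\geq 2$ rather than $d\geq 1$, which changes nothing).
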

\begin{proof}
    Let $H = \{i_1,\ldots,i_k\} \subseteq [q]$ be a hitting set of $\vsupp(f)$. 
    Partition the set $\vsupp(f) = A_1\cup \cdots \cup A_k$ arbitrarily subject to $a_{i_j}\geq 1$ whenever $a \in A_j$.
    Write $f_j$ for the sum over all monomials 
    $c_a \cdot X^a$ (including their coefficients $c_a \neq 0$) of $f$ such that $a \in A_j$.
    Since the $A_j$ partition $\vsupp(f)$, it follows that $f = \sum_{j=1}^k f_j$.
    By definition of $A_j$, $X_{i_j}$ divides every monomial in $f_j$
    and hence $f_j$ itself, so that $f_j = X_{i_j} \cdot g_j$ holds for some $g_j$. Therefore, $f$ admits a representation as 
    $f = \sum_{j=1}^k X_{i_j} \cdot g_j$, proving $S_{d-1,1}(f) \leq k$.
\end{proof}

The following lower bound for $S_{d-1,1}$
is structured for analogy with the original
proof of Sawin and Tao for $R_{d-1,1}$. Accordingly, 
endow the space  
$\mathbb F[X_1,\ldots,X_q]_d$ with the symmetric bilinear form defined 
for all $a,b\in\mathbb N^q$ with $|a|=|b|=d$ by 
extension from the rule
\begin{equation}
\label{eq:ip}
\langle X^a, X^b \rangle = \begin{cases} a!/d!\, &\text{if $a=b$}\,,\\ 
0\, &\text{otherwise}.\end{cases}
\end{equation}
In particular, the standard basis of monomials $X^a$ for $a\in\mathbb{N}^q$ with $|a|=d$ is orthogonal under this bilinear form.
For a subset $A\subseteq\mathbb N^q$, let us write $\max A\subseteq A$ 
for the subset of all $\preceq$-maximal elements of $A$. 

\begin{lemma}[Hitting-set lower bound for slice strength] \label{lem:hs-lb}
    For all $f\in \mathbb F[X_1,\ldots,X_q]_d$ 
    it holds that $S_{d-1,1}(f) \geq \mathrm{hs}(\max\vsupp(f))$.
\end{lemma}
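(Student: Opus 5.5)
The plan is a substitution argument in the spirit of Sawin and Tao. Instead of working with orthogonal complements under the bilinear form \eqref{eq:ip}, I will kill the linear forms of a slice decomposition directly.

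\emph{Setup.} Suppose $f=\sum_{j=1}^s \ell_j g_j$ with linear forms $\ell_j$ and $s=S_{d-1,1}(f)$. Replacing $\ell_1,\ldots,\ell_s$ by a basis of their span (and re-expressing the $g_j$ accordingly) only decreases the number of terms, so I may assume the $\ell_j$ are linearly independent.

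\emph{Step 1: echelon form.} I perform Gaussian elimination on the coefficient vectors of $\ell_1,\ldots,\ell_s$, choosing as pivot of each form the \emph{largest} index among its variables. This yields a new basis of the same span, again called $\ell_1,\ldots,\ell_s$, with distinct pivot indices $i_1,\ldots,i_s$ such that
\[
\ell_j = X_{i_j} - \sum_{k<i_j,\ k\notin H} c_{j,k} X_k\,,
\]
where $H=\{i_1,\ldots,i_s\}$. In other words, each pivot variable occurs only in its own form, and every other variable in $\ell_j$ has smaller index. The decomposition is rewritten in this basis, still with $s$ terms. I will show that $H$ is a hitting set of $\max\vsupp(f)$, which gives the lemma.

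\emph{Step 2: substitution.} Let $\varphi$ be the $\mathbb F$-algebra endomorphism of $\mathbb F[X_1,\ldots,X_q]$ defined by $X_{i_j}\mapsto \sum_k c_{j,k}X_k$ for each $j\in[s]$, and fixing all other variables. Then $\varphi(\ell_j)=0$ for every $j$, hence $\varphi(f)=0$. Monomials $X^a$ with $a\wedge H=0$ are fixed by $\varphi$.

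For a monomial $X^b$ with $b\wedge H\neq 0$, expanding $\varphi(X^b)$ gives only monomials $X^c$ obtained from $b$ by moving each unit of degree at a pivot index $i_j$ to some strictly smaller index. Moving mass to a smaller index can only increase prefix sums, so every such $c$ satisfies $c\preceq b$. Moreover $c\neq b$, since at least one unit actually moves.

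\emph{Step 3: conclusion.} Suppose some $a\in\max\vsupp(f)$ had $a\wedge H=0$. In $\varphi(f)=0$, the coefficient of $X^a$ receives the nonzero contribution $c_a$ from $X^a$ itself. It can receive further contributions only from monomials $X^b$ in $f$ with $a\preceq b$ and $a\neq b$: indeed, monomials avoiding $H$ map to themselves and cannot produce $X^a$ unless $b=a$. Cancellation therefore forces some $b\in\vsupp(f)$ with $a\prec b$, contradicting maximality of $a$. Hence $H$ hits every maximal element, and $\mathrm{hs}(\max\vsupp(f))\le s$.

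\emph{Main obstacle.} The delicate step is choosing the pivot convention so that it matches the direction of the Gale order; choosing the largest index as pivot is what makes the substituted monomials move downward in $\preceq$. One also needs $\preceq$ to be antisymmetric, which holds because equal prefix sums force equal compositions. The argument uses no property of the characteristic, so the form \eqref{eq:ip} is not needed. It could be used to phrase the same argument dually, as in Sawin--Tao, with the substitution replaced by a vector in the orthogonal complement of the span of the $\ell_j g_j$.
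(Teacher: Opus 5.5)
Your proof is correct, and it takes a dual, more elementary route than the paper's. The paper works with the common zero set $\hat U$ of the $\ell_i$. It uses the normalized bilinear form \eqref{eq:ip} to rewrite $f|_{\hat U}=0$ as $\langle f,u^d\rangle=0$ for $u\in U$. It then polarizes to get $\langle f,u^m\rangle=0$ for products of a reduced row-echelon basis of $U$, taken with the \emph{smallest} index as pivot. Finally it argues that the pivot monomial is the $\preceq$-minimum of $\vsupp(u^m)$, and its hitting set is the complement of those pivots. You instead row-reduce the $\ell_j$ themselves with the largest index as pivot and substitute. These are the same computation seen from opposite sides. Your $\varphi$ is exactly the parametrization of $\hat U$ by the free coordinates $X_k$, $k\notin H$. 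The paper's echelon basis of $U$ consists of $u_k=X_k+\sum_{j:\,i_j>k}c_{j,k}X_{i_j}$ for $k\notin H$, so its hitting set coincides with your $H$. In characteristic zero, the coefficient of $\prod_{k\notin H}X_k^{m_k}$ in $\varphi(f)$ equals $\frac{d!}{m!}\langle f,\prod_{k\notin H}u_k^{m_k}\rangle$. Your version never divides by $d!$ or $2^d d!$, so it needs neither \eqref{eq:ip} nor the polarization identity. The lemma therefore holds over an arbitrary field, whereas the paper's argument relies on the characteristic-zero assumption made for that section. The paper's version buys a line-by-line parallel with the Sawin--Tao orthogonal-complement proof for slice rank, which is the analogy the authors set out to display.
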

\begin{proof}
    Suppose that $f = \sum_{i=1}^s \ell_i g_i$ for linear forms $\ell_i \in \mathbb F[X_1,\ldots,X_q]_1$, where $s = S_{d-1,1}(f)$.
    Then, the vector space 
    \[
    \hat U = \{ \hat u \in \mathbb F^q : \ell_i(\hat u) = 0\ \text{for all $i\in[s]$}\}
    \]
    has linear dimension $k \geq q-s$.
    By definition, $f(\hat u) = \sum_{i=1}^s \ell_i(\hat u) g_i(\hat u) = 0$ holds for all $\hat u \in \hat U$.
    Identify each vector $\hat u\in \hat U$ with the linear form
    $u=\sum_{j\in[q]}\hat u_jX_j\in\mathbb{F}[X_1,\ldots,X_q]_1$, and observe that the set of all
    such linear forms defines a vector space 
    $U\subseteq\mathbb{F}[X_1,\ldots,X_q]_1$.
    We now have 
    $\langle f,u^d \rangle = 0$ for all $u \in U$; indeed, observe that for all $a \in \mathbb N^q$ with $|a| = d$, we have the identity
    $\langle X^a, u^d\rangle=X^a(\hat u)$ by \eqref{eq:ip} and
    thus by linearity $\langle f,u^d\rangle = f(\hat u)=0$
    for all $u\in U$. Using linearity and the polarization identity 
    \[
    u_1 \cdots u_d = \frac{1}{2^d d!} \sum_{\sigma \in \{-1,1\}^{d}} \left( \prod_{j=1}^d \sigma(j)\right) \cdot \left( \sum_{i=1}^d \sigma(i)u_i  \right)^d
    \]
    it follows that $\langle f,u_1 \cdots u_d\rangle = 0$ holds for every choice of $u_1,\ldots,u_d \in U$ as well.
    
    Let $u_1,\ldots,u_{k} \in U$ be the unique basis of $U$ in reduced row-echelon form with respect to the basis $X_1,\ldots,X_q$ for the containing vector space $\mathbb{F}[X_1,\ldots,X_q]_1$. 
    That is, there is a sequence $1 \leq i_1 < \cdots < i_k \leq q$ such that $u_j = X_{i_j} + u_j'$, where the monomials of $u_j'$ contain only indeterminates $X_\ell$ with indices $\ell > i_j$.
    Now let $m=(m_1,\ldots,m_k)\in\mathbb{N}^k$ with $m_1+\ldots+m_k=d$
    and write $a^{[m]}\in \mathbb N^q$ for the composition that
    satisfies $|a^{[m]}|=d$ and $a_{i_j}^{[m]}=m_j$ for all $j\in[k]$.
    By the structure of the basis $u_1,\ldots,u_k$ and additivity of Gale order $\preceq$, 
    the product $u^m=u_1^{m_1}\cdots u_k^{m_k}$ has the property that
    $a^{[m]}$ is the $\preceq$-minimum element of $\vsupp(u^m)$.
    Furthermore, $\langle f,u^m\rangle=0$. Now observe that for all $b\in\vsupp(f)$
    it holds that $\langle f,X^b\rangle\neq 0$. We claim that
    $a^{[m]}\notin\max\vsupp(f)$. To reach a contradiction, suppose that
    $a^{[m]}\in\max\vsupp(f)$. Then $X^{a^{[m]}}$ contributes a 
    nonzero term to $\langle f,u^m\rangle$, implying by $\langle f,u^m\rangle=0$ that there 
    exists a $b\in\vsupp(u^m)\setminus\{a^{[m]}\}$ such that 
    $X^b$ also contributes a 
    nonzero term to $\langle f,u^m\rangle$. Since $b\in\vsupp(f)$ and $a^{[m]}\prec b$, we obtain a contradiction to $a^{[m]}\in\max\vsupp(f)$.
    Since $m$ was arbitrary, it follows that every $a\in\max\vsupp(f)$
    has the property that there exists an $\ell\in H:=[q]\setminus\{i_1,i_2,\ldots,i_k\}$ with $a_\ell\geq 1$. That is, $H$ is a 
    hitting set of $\max\vsupp(f)$ with $|H|=q-k\leq s$.
\end{proof}

\begin{lemma}[NP-hardness of slice strength for degree $d\geq 3$] \label{lem:np-hard-slice}
    Given as input a list of monomials for a degree $d\geq 3$ homogeneous multilinear polynomial $f$ with coefficients in $\{0,1\}$,
    it is $\mathrm{NP}$-hard to compute $S_{d-1,1}(f)$.
\end{lemma}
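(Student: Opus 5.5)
The plan is to reduce from \textsc{Vertex Cover} in graphs, which is NP-hard. The reduction sandwiches $S_{d-1,1}(f)$ between the two hitting-set bounds: \cref{lem:hs-ub} gives $S_{d-1,1}(f)\le \mathrm{hs}(\vsupp f)$, and \cref{lem:hs-lb} gives $S_{d-1,1}(f)\ge \mathrm{hs}(\max\vsupp f)$. I will build a multilinear $0/1$ polynomial $f$ whose support is an antichain in Gale order, so that $\max\vsupp f=\vsupp f$. Then $S_{d-1,1}(f)=\mathrm{hs}(\vsupp f)$ exactly, and it remains to make this hitting-set number equal to the vertex cover number of $G$.

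\textbf{Construction.} Let $G=(V,E)$ with $V=[n]$ and $E=\{e_1,\ldots,e_M\}$. Introduce vertex indeterminates $X_1,\ldots,X_n$ and, for each edge $e$, $d-2$ private indeterminates $Z_{e,1},\ldots,Z_{e,d-2}$. For $e=\{u,v\}$ put $f=\sum_{e\in E} X_uX_v\prod_{j}Z_{e,j}$; this is multilinear and homogeneous of degree $d$ with coefficients in $\{0,1\}$, and it is computable in polynomial time. The variable order is chosen so that:
\begin{itemize}
\item the vertex variables come first, in the order $1,\ldots,n$;
\item the private variables follow in contiguous blocks, one block per edge;
\item the edges are listed by increasing $u+v$ with $u<v$, and their blocks are placed in the \emph{reverse} of that listing.
\end{itemize}

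\textbf{Hitting sets equal vertex covers.} A vertex cover $C$ of $G$ gives the hitting set $\{X_c : c\in C\}$ of $\vsupp f$. Conversely, let $H$ be any hitting set. Replace each private variable $Z_{e,j}\in H$ by $X_u$ for an endpoint $u$ of $e$. This does not increase $|H|$, and the result still hits every monomial, because $Z_{e,j}$ hit only the monomial of $e$, which also contains $X_u$. The modified set then corresponds to a vertex cover of $G$. Hence $\mathrm{hs}(\vsupp f)=\tau(G)$, the vertex cover number.

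\textbf{Antichain property (main obstacle).} For subsets of equal size $d$, Gale order $S\preceq T$ is equivalent to the sorted elements of $S$ being coordinatewise at most those of $T$. The sorted position tuple of the monomial of $e=\{u,v\}$ is $(u,v,p_e+1,\ldots,p_e+d-2)$, where $p_e$ is the offset of $e$'s block.

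Take distinct edges $e=\{u,v\}$ and $e'=\{u',v'\}$. Comparability would force $(u,v)\le(u',v')$ coordinatewise together with $p_e\le p_{e'}$, or the reverse of both. Suppose $(u,v)\le(u',v')$ coordinatewise with $(u,v)\neq(u',v')$. Then $u+v<u'+v'$, so $e$ precedes $e'$ in the listing, and by the reversal its block lies strictly later: $p_e>p_{e'}$. The tail coordinates therefore go the opposite way and the two monomials are incomparable; the symmetric case is identical. Note that ties in $u+v$ between distinct edges never give coordinatewise-comparable pairs, so any tie-breaking works.

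This step is where care is needed: the ordering must reverse every strict vertex-pair comparison, and this is the only place where $d\ge3$ is used, since at least one private variable is required. With $\max\vsupp f=\vsupp f$, the two lemmas give $S_{d-1,1}(f)=\tau(G)$, which completes the reduction.
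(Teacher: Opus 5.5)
Your proof is correct and follows essentially the same route as the paper. You sandwich $S_{d-1,1}(f)$ between \cref{lem:hs-lb} and \cref{lem:hs-ub}, make $\vsupp f$ a Gale antichain by appending private variables whose positions reverse a linear extension of the Gale order on the original sets, and swap private variables out of any hitting set. The only difference is cosmetic: the paper reduces from $(d-1)$-uniform hitting set with one private variable per set (at position $q+1-i$), whereas you reduce from vertex cover for every $d\geq 3$, pad each edge with a block of $d-2$ private variables, and use $u+v$ as an explicit linear extension instead of a topological sort.
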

\begin{proof}
    From \Cref{lem:hs-lb} and \Cref{lem:hs-ub} we have $S_{d-1,1}(f) = \mathrm{hs}(\vsupp(f))$ whenever $\vsupp(f)$ is an antichain with respect to the Gale order $\preceq$. 
    Thus, it suffices to show that the hitting set problem remains $\mathrm{NP}$-hard over an ordered universe with inputs consisting of such antichains.
    Given as input an instance $S_1,\ldots,S_m\subseteq [n]$ of hitting set with 
    $|S_1|=\cdots=|S_m|=d-1$. Take $q=n+m$ and set $S_i'=S_i\cup\{q+1-i\}\subseteq[q]$ 
    for $i\in[m]$. 
    Without loss of generality we may assume that $S_1,\ldots,S_m$ are listed in a
    linearization of the Gale order; namely, $S_i\preceq S_j$ implies $i\leq j$ for all $i,j\in[m]$. 
    (This can be easily ensured in polynomial time via a topological sorting on the $S_i$.)
    Now, every hitting set of size $k$ for $S_1',\ldots,S_m'\subseteq[q]$ 
    corresponds to a hitting set of size $k$ of $S_1,\ldots,S_m\subseteq[n]$. Indeed, every occurrence of $q+1-i$ for $i\in[m]$ in a hitting set is necessarily superfluous, as it can always be replaced by an arbitrary element of $[n]$ contained already in $S_i$. This yields a hitting set of size at most $k$ of the original instance. Conversely, every hitting set of $S_1,\ldots,S_m$ is already a hitting set of $S_1',\ldots,S_m'$, 
    so there is nothing to show.
    Moreover, let $i,j \in [m]$ with $i \neq j$. If $S_i$ and $S_j$ were already Gale-incomparable in the original instance, then so are $S'_i$ and $S'_j$.
    If $S_i \prec S_j$, then $i < j$ and so $q+1-i > q+1-j$ by definition, hence $S'_i$ and $S'_j$ are again Gale-incomparable.
    
    Now, given an input instance for hitting set, it is enough to write down the 
    polynomial $f = \sum_{i=1}^m X^{S_i'}\in\mathbb F[X_1,\ldots,X_q]$, which clearly satisfies the restrictions posed in the statement of the claim. We also note that the hitting set problem remains $\mathrm{NP}$-hard when $|S_i|=d-1 \geq 2$; indeed, for $d-1=2$ the hitting set problem is the minimum vertex cover problem, and hence $|S'_i| = d\geq 3$ holds for all $i\in[m]$.
\end{proof}

\begin{remark}
    It should be noted that our result on slice strength does not imply the result of Bläser et al.~\cite{BlaserILPS21} on slice rank, because the correspondence between slice rank and slice strength is not straightforward. Moreover, they are interested in other specific properties of slice rank that may make it useful for building intuition in the context of geometric complexity theory.
    We do not know whether slice strength also satisfies these properties. 
\end{remark}

\begin{lemma}[NP-hardness of two-slice strength for degree $d\geq 4$]
\label{lem:np-hard-two-slice}
   Given as input a list of monomials for a degree $d\geq 4$ homogeneous multilinear polynomial $f$ with coefficients in $\{0,1\}$,
    it is $\mathrm{NP}$-hard to compute $S_{d-2,1,1}(f)$.
\end{lemma}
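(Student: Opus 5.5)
Since the statement mirrors \cref{lem:np-hard-slice}, my approach is to reduce from slice strength in degree $d-1\geq 3$ by introducing a fresh ``pairing'' indeterminate. Given a multilinear $\{0,1\}$-polynomial $f'\in\mathbb F[X_1,\ldots,X_q]_{d-1}$ produced by the reduction in \cref{lem:np-hard-slice}, so that $\vsupp(f')$ is a Gale antichain and $S_{d-2,1}(f')=\mathrm{hs}(\vsupp f')$, I will set $f=Y\cdot f'$ with a new indeterminate $Y$. This $f$ is again homogeneous, multilinear, of degree $d\geq 4$, with coefficients in $\{0,1\}$. I then aim to prove $S_{d-2,1,1}(f)=S_{d-2,1}(f')$, which transfers the NP-hardness directly.

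The upper bound is immediate. If $f'=\sum_{i=1}^s \ell_i g_i$ with the $\ell_i$ linear and $\deg g_i=d-2$, then $f=\sum_{i=1}^s Y\ell_i g_i$ is a $(d-2,1,1)$-decomposition, so $S_{d-2,1,1}(f)\leq S_{d-2,1}(f')$.

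For the lower bound, suppose $Yf'=\sum_{i=1}^s \alpha_i\beta_i g_i$. I will substitute $Y=1$, or more carefully, differentiate with respect to $Y$ and then set $Y=0$. Since $f$ is linear in $Y$, we have $\partial_Y f = f'$. Applying the product rule,
\[
f'=\partial_Y(Yf')\big|_{Y=0}=\sum_{i=1}^s\Bigl(\partial_Y\alpha_i\,\beta_i g_i+\alpha_i\,\partial_Y\beta_i\, g_i+\alpha_i\beta_i\,\partial_Y g_i\Bigr)\Big|_{Y=0}.
\]
Each summand is (linear form)$\times$(degree $d-2$ form), except that the first two terms combine as $(c_i\beta_i+c_i'\alpha_i)|_{Y=0}\cdot g_i|_{Y=0}$ with constants $c_i,c_i'$, which is again of that type. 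So each $i$ contributes at most two terms, giving only $S_{d-2,1}(f')\leq 2s$.

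That factor of two is the main obstacle, and it has to be removed to get exact equality. My first attempt is to restrict to the hyperplane $Y=1$ instead, yielding $f'=\sum_i \alpha_i\beta_i g_i|_{Y=1}$ where $\alpha_i|_{Y=1}$ may be affine. Because $f'$ is homogeneous, I can take the degree-$(d-1)$ component of each summand. This component is a sum of pieces each still divisible by one of the homogeneous linear parts of $\alpha_i$ or $\beta_i$ whenever that part is nonzero, but the constant parts again cause splitting into two terms.

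If the exact reduction fails, I will instead rerun the argument of \cref{lem:hs-lb} directly on $f$. Its zero-set argument only used that $f$ vanishes on $\{\ell_i=0\}$; here $f$ vanishes on the union over choices of $\{\alpha_i=0\}$ or $\{\beta_i=0\}$, and this vanishing must be tracked. Concretely, I would choose $\gamma_i\in\{\alpha_i,\beta_i\}$ so that a common subspace of codimension at most $s$ (within the $X$-coordinates, with $Y$ fixed to $1$) kills $f'$. The hitting-set lower bound for $\max\vsupp(f')$ then yields $\mathrm{hs}\leq s$. Making this choice rigorous, and ensuring that a single factor per summand suffices despite the affine shift, is the step I expect to require the most care. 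A fallback is to pick the instance such that the factor-two loss is harmless, namely a gap version of vertex cover that is NP-hard to approximate within a factor of two, although that only gives hardness of approximation-type separation rather than exactness.
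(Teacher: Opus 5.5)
The upper bound $S_{d-2,1,1}(Yf')\le S_{d-2,1}(f')$ is fine. The reverse inequality, which carries all of the hardness, is never proved, so the proposal does not establish the lemma.

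Write $\alpha_i=a_i+c_iY$, $\beta_i=b_i+c_i'Y$, and let $g_{i,j}$ be the $Y^j$-coefficient of $g_i$. Both of your extractions (differentiating, and setting $Y=1$) give $f'=\sum_i\bigl(a_ib_ig_{i,1}+(c_ib_i+c_i'a_i)g_{i,0}\bigr)$. A summand with $c_ic_i'\neq 0$ is in general two slices rather than one, so you only get $S_{d-2,1}(f')\le 2s$.

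The zero-set route needs a choice $\gamma_i\in\{\alpha_i,\beta_i\}$ with $Y\notin\operatorname{span}(\gamma_1,\ldots,\gamma_s)$, and nothing in your sketch prevents every choice from failing. For instance, for any $k\in\mathbb F[X]_{d-3}$, the identity $Y\cdot(-X_1^2k)=(Y+X_1)(Y-X_1)\cdot Yk+Y\cdot Y\cdot(-Yk)$ is a valid decomposition in which every set $\{\gamma_1=\gamma_2=0\}$ lies inside $\{Y=0\}$.

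The fallback fails as well. For $d=4$ your $f'$ encodes graph vertex cover (sets of size $2$ padded to $3$ in \cref{lem:np-hard-slice}). Vertex cover has a polynomial-time $2$-approximation, so the sandwich $\tfrac12\mathrm{vc}(G)\le S_{d-2,1,1}(f)\le\mathrm{vc}(G)$ cannot yield NP-hardness unless $\mathrm{P}=\mathrm{NP}$.

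The root of the difficulty is the global factor $Y$. It forces $S_{d-1,1}(Yf')=1$, so the free bound $S_{d-1,1}\le S_{d-2,1,1}$ combined with \cref{lem:hs-lb} is useless. The paper avoids this by pairing each vertex variable with a twin, taking $f=\sum_{\{u,v\}\in E(G)}X_uX_vX_{q+1-u}X_{q+1-v}$. A vertex cover $C$ gives $f=\sum_{c\in C}X_cX_{q+1-c}g_c$. Meanwhile $\vsupp(f)$ is a Gale antichain with hitting number $\mathrm{vc}(G)$, so $S_{d-2,1,1}(f)\ge S_{d-1,1}(f)=\mathrm{vc}(G)$ comes for free.

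Your route can be repaired, because the equality you aim for is true. The missing idea is to strengthen the statement so that one factor per summand can be peeled off.

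Claim: if $h\in\mathbb F[X_1,\ldots,X_q,Y]_{d-1}$ is arbitrary (it may involve $Y$) and $Yh=\sum_{i=1}^s\alpha_i\beta_ig_i$, then $S_{d-2,1}(h|_{Y=0})\le s$. Induct on $s$.
\begin{itemize}
\item If $\alpha_s=cY$, then $Y(h-c\beta_sg_s)=\sum_{i<s}\alpha_i\beta_ig_i$. Moreover $h|_{Y=0}$ differs from $(h-c\beta_sg_s)|_{Y=0}$ by the single slice $c\,\beta_s|_{Y=0}\,g_s|_{Y=0}$.
\item Otherwise $a:=\alpha_s|_{Y=0}\neq 0$. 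Complete $a$ to a basis $X_1',\ldots,X_{q-1}',a$ of $\mathbb F[X]_1$ and reduce modulo $\alpha_s$. This leaves $Y$ as a free variable, and the identity becomes $Y\bar h=\sum_{i<s}\bar\alpha_i\bar\beta_i\bar g_i$ in $\mathbb F[X_1',\ldots,X_{q-1}',Y]$. Induction gives $S_{d-2,1}(\bar h|_{Y=0})\le s-1$. Since $\bar h|_{Y=0}$ is exactly $h|_{Y=0}$ reduced modulo $a$, the two differ by a multiple of $a$, which costs one more slice.
\end{itemize}
Taking $h=f'$ gives $S_{d-2,1}(f')\le S_{d-2,1,1}(Yf')$. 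With that, your reduction from \cref{lem:np-hard-slice} works for every $d\ge 4$ without padding, but this argument is absent from your proposal.
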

\begin{proof}
    As in the proof of~\Cref{lem:np-hard-slice}, consider an instance of the vertex cover (that is, $2$-uniform hitting set) problem on a simple loopless graph $G$ with vertex set
    $V(G)=[n]$ and edge set $E(G)$, where each edge is a 2-element subset of $V(G)$. 
    Let $\mathrm{vc}(G)$ be the size of a minimum vertex cover on $G$.
    Now, take $q=2n$ and
    construct the multilinear polynomial 
    $f = \sum_{\{u,v\} \in E(G)} X_u X_v X_{q+1-u} X_{q+1-v}\in\mathbb F[X_1,\ldots,X_q]$.
    Since every vertex cover of $G$ corresponds to a hitting set of the same size of the degree sequences of the monomials of $f$ (and vice-versa for minimum hitting sets; this is analogous to the proof of~\cref{lem:np-hard-slice}), it follows that 
    \begin{align} \label{eq:vc-hs}
    \mathrm{vc}(G) = \mathrm{hs}(\vsupp(f))\,.
    \end{align}
    Moreover, by the structure of $f$, vertex cover $C$ corresponds to a decomposition $f = \sum_{c \in C} X_c X_{q+1-c} g_c$, in analogy with the proof of~\Cref{lem:hs-ub} (although for $S_{d-2,1,1}$, this upper bound hinges on the specific construction for $f$): We can arbitrarily partition the monomials of $f$ in a way compatible with $C$, and then factor out $X_c X_{q+1-c}$. 
    This shows
    \begin{align} \label{eq:str-hs}
    S_{d-2,1,1}(f) \leq \mathrm{hs}(\vsupp (f))
    \end{align}
    for this particular choice of $f$.
    
    In addition, by the choice of variable ordering, the degree sequences of monomials of $f$ are already a Gale-antichain: If $\{u,v\} \prec \{u', v'\}$ (and hence $u\leq u'$ and $v \leq v'$), then the augmented degree sequences corresponding to $\{u,v,q+1-u,q+1-v\}$ and $\{u',v',q+1-u',q+1-v'\}$ are incomparable. If $\{u,v\}$ and $\{u',v'\}$ were already incomparable, then they remain so after augmentation. It follows that 
    \begin{align} \label{eq:vsupp}
    \vsupp(f) = \max \vsupp(f)\,.
    \end{align}
    Thus \Cref{lem:hs-lb,lem:hs-ub} in combination with~\Cref{eq:vc-hs,eq:str-hs,eq:vsupp} and the fact that $S_{d-1,1}(f) \leq S_{d-2,1,1}(f)$ together imply 
    \[
    S_{d-2,1,1}(f) \leq \mathrm{vc}(G) = \mathrm{hs}(\vsupp(f)) = S_{d-1,1}(f) \leq S_{d-2,1,1}(f)
    \]
    so that $S_{d-2,1,1}(f) = \mathrm{vc}(G)$ as desired, proving $\mathrm{NP}$-hardness.
    The argument can be extended to $d \geq 5$ via padding the monomials with fresh variables per edge, which does not lower $S_{d-2,1,1}$.
\end{proof}

\begin{remark}
    The polynomials produced by the preceding reduction all have two-slice rank bounded by $O(n)$, so that they cannot be used to construct explicit examples of polynomials with superlinear two-slice rank.
\end{remark}

\section*{Acknowledgment}
We would like to thank Markus Bläser and Radu Curticapean for valuable discussions, and the anonymous referees for their valuable comments.

\bibliographystyle{plainurl}
\bibliography{notes.bib}

\end{document}